\newtheorem{theorem}{Theorem}
\newtheorem{lemma}[theorem]{Lemma}
\theoremstyle{definition}
\newtheorem{definition}[theorem]{Definition}
\newtheorem{proposition}[theorem]{Proposition}
\begin{document}                                                 
\title[On Diffusion System on an Evolving Surface with a Boundary]{On Generalized Diffusion and Heat Systems on an Evolving Surface with a Boundary}                                 
\author[Hajime Koba]{Hajime Koba}                                
\address{Graduate School of Engineering Science, Osaka University,\\
1-3 Machikaneyamacho, Toyonaka, Osaka, 560-8531, Japan}                                  
\email{iti@sigmath.es.osaka-u.ac.jp}                                      
\date{}                                       
\keywords{Mathematical modeling, Energetic variational approach, Evolving surface with boundary, Boundary condition in co-normal direction, Double bubble}                                   
\subjclass[2010]{97M50, 49S05, 49Q20}                                  
\begin{abstract}
We consider a diffusion process on an evolving surface with a piecewise Lipschitz-continuous boundary from an energetic point of view. We employ an energetic variational approach with both surface divergence and transport theorems to derive the generalized diffusion and heat systems on the evolving surface. Moreover, we investigate the boundary conditions for the two systems to study the conservation and energy laws of them. As an application, we make a mathematical model for a diffusion process on an evolving double bubble. Especially, this paper is devoted to deriving the representation formula for the unit outer co-normal vector to the boundary of a surface.
\end{abstract}       
\maketitle

\section{Introduction}\label{sect1}

\begin{figure}[htbp]
\begin{minipage}{0.5\hsize}
\begin{center}
\includegraphics[width=50mm]{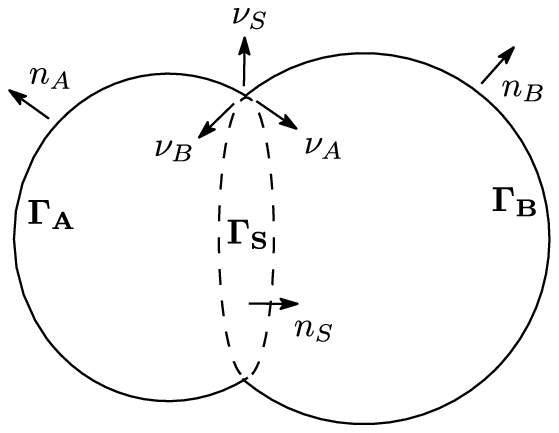}
\caption{Evolving Double Bubble}
\label{Fig1}
\end{center}
\end{minipage}%
\begin{minipage}{0.5\hsize}
\begin{center}
\includegraphics[width=50mm]{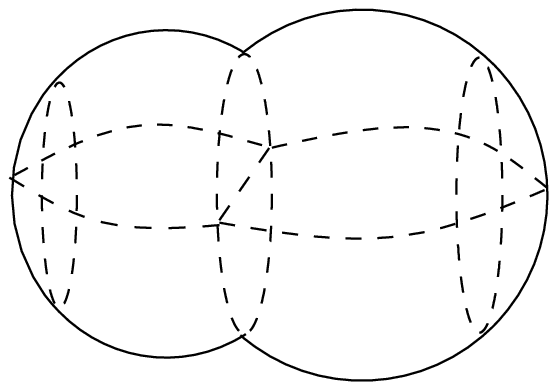}
\caption{Division of Double Bubble}
\label{Fig2}
\end{center}
\end{minipage}

\end{figure}
We are interested in diffusion processes such as concentration and thermal diffusions on an evolving double bubble (see Fig. \ref{Fig1}). This paper makes a mathematical model for a diffusion process on the evolving double bubble. To this end, we derive and study the diffusion and heat systems on an evolving surface with a boundary from an energetic point of view. An evolving surface means that the surface is moving or the shape of the surface is changing along with the time. Especially, this paper investigates the boundary conditions in co-normal direction for our systems.

Let us first introduce basic notation. Let $x = { }^t ( x_1 , x_2, x_3) \in \mathbb{R}^3$, $X = { }^t (X_1 , X_2) \in \mathbb{R}^2$ be the spatial variables, and $t , \tau \geq 0$ be the time variable. Let $T \in (0, \infty]$, and let $\Gamma (t) (= \{ \Gamma (t) \}_{0 \leq t < T}) $ be an evolving surface with a piecewise Lipschitz-continuous boundary. The symbol $n = n (x,t) = { }^t (n_1,n_2 , n_3)$ denotes the unit outer normal vector at $x \in \overline{\Gamma (t)}$, and $\nu = \nu ( x, t) = { }^t (\nu_1 , \nu_2 , \nu_3)$ denotes the unit outer co-normal vector at $x \in \partial \Gamma (t)$.  The notation $w = w (x , t) = { }^t (w_1 , w_2 ,w_3)$ means the \emph{motion velocity} of the evolving surface $\Gamma (t)$. We often call $w$ the \emph{speed} of the evolving surface $\Gamma (t)$. See Section \ref{sect2} for the definitions of $\Gamma (t)$, $\nu$, and $w$. Set
\begin{equation*}
\Gamma_T = \bigcup_{0 <  t  < T}\{ \Gamma (t) \times \{ t \} \},{ \ \ \ }\partial \Gamma_T = \bigcup_{0 <  t  < T}\{ \partial \Gamma (t) \times \{ t \} \}.
\end{equation*}
The notation $C = C ( x , t )$ represent the \emph{concentration of amount of a substance} on $\Gamma (t)$, and $\rho = \rho ( x, t )$ and $\theta = \theta (x , t)$ represent the \emph{density} and the \emph{temperature} of a substance on $\Gamma (t)$, respectively.

This paper has three purposes. The first one is to derive and study the following \emph{generalized diffusion system} on the evolving surface $\Gamma_T$:
\begin{equation}\label{eq11}
D_t^w C + ({\rm{div}}_\Gamma w ) C = {\rm{div}}_\Gamma \{ e_1' (| {\rm{grad}}_\Gamma C |^2 ) {\rm{grad}}_\Gamma C \} \text{ on } \Gamma_T,
\end{equation}
and the following \emph{generalized heat system} on the evolving surface $\Gamma_T$:
\begin{equation}\label{eq12}
\begin{cases}
D_t^w \rho + ({\rm{div}}_\Gamma w ) \rho = 0 & \text{ on } \Gamma_T,\\
\rho D_t^w \theta = {\rm{div}}_\Gamma \{ e_2' (| {\rm{grad}}_\Gamma \theta |^2 ) {\rm{grad}}_\Gamma \theta \} & \text{ on }\Gamma_T,
\end{cases}
\end{equation}
where $D_t^w f = \partial_t f + ( w \cdot \nabla ) f$, $\partial_t = \partial/{\partial t}$, $\nabla = { }^t (\partial_1 , \partial_2 , \partial_3)$, $\partial_j = \partial/{\partial x_j}$, ${\rm{div}}_\Gamma w = \nabla_\Gamma \cdot w$, ${\rm{grad}}_\Gamma f = \nabla_\Gamma f $, $\nabla_\Gamma = { }^t (\partial_1^\Gamma , \partial_2^\Gamma , \partial_3^\Gamma )$, and $\partial_j^\Gamma f = \sum_{i=1}^3 (\delta_{i j}  - n_j n_i ) \partial_i f$. Here $e_1$ and $e_2$ are two $C^1$-functions, and $\displaystyle{e'_{\mathfrak{j}} = e'_{\mathfrak{j}} ( r) = {d e_{\mathfrak{j} }}/{dr}(r)}$ ($\mathfrak{j} =1,2$). This paper derives the two systems \eqref{eq11} and \eqref{eq12} from an energetic point of view. More precisely, we use the energy density $e_1 (| {\rm{grad}}_\Gamma C |^2 )$ for the energy dissipation due to general diffusion, the energy density $e_2 (| {\rm{grad}}_\Gamma \theta |^2 )$ for the energy dissipation due to thermal diffusion, and an energetic variational approach to derive the two systems. Moreover, we investigate the boundary conditions in co-normal direction for the two systems to consider the conservation and energy laws of them. In fact, if $\partial C/\partial \nu|_{\partial \Gamma (t)} = 0$, then the system \eqref{eq11} satisfies the following conservation law$(t_1 < t_2)$:
\begin{equation}\label{eq13}
\int_{\Gamma (t_2)} C (x,t_2) { \ } d \mathcal{H}^2_x = \int_{\Gamma (t_1)} C (x,t_1) { \ } d \mathcal{H}^2_x,
\end{equation}
and the following energy law$(t_1 < t_2)$:
\begin{multline}\label{eq14}
\int_{\Gamma (t_2)} \frac{1}{2} | C (x,t_2) |^2 { \ } d \mathcal{H}^2_x + \int_{t_1}^{t_2} \int_{\Gamma (\tau)} e_1' (| {\rm{grad}}_\Gamma C |^2 ) | {\rm{grad}}_\Gamma C |^2 { \ }d \mathcal{H}^2_x d \tau \\ = \int_{\Gamma (t_1)} \frac{1}{2} | C (x,t_1) |^2 { \ } d \mathcal{H}^2_x.
\end{multline}
Here $\partial f/\partial \nu := ( \nu \cdot \nabla_\Gamma) f$. See Theorem \ref{thm27} for the conservation and energy laws of the system \eqref{eq12}. Remark that the system \eqref{eq11} is equivalent to
\begin{equation}\label{eq15}
D_t^N C + {\rm{div}}_\Gamma \{ C w - e_1' (| {\rm{grad}}_\Gamma C |^2 ) {\rm{grad}}_\Gamma C \} = 0 \text{ on }\Gamma_T,
\end{equation}
and that the system \eqref{eq12} is equivalent to
\begin{equation}\label{eq16}
\begin{cases}
D_t^N \rho + {\rm{div}}_\Gamma ( \rho w ) = 0 & \text{ on }\Gamma_T,\\
D_t^N ( \rho \theta ) + {\rm{div}}_\Gamma \{ \rho \theta w - e_2' (| {\rm{grad}}_\Gamma \theta |^2 ) {\rm{grad}}_\Gamma \theta \} = 0 & \text{ on }\Gamma_T.
\end{cases}
\end{equation}
Here $D_t^N f = \partial_t f + ( w \cdot n) (n \cdot \nabla ) f $. The forms \eqref{eq15} and \eqref{eq16} are conservative forms of the systems \eqref{eq11} and \eqref{eq12}, respectively. Remark also that
\begin{equation*}
{\rm{div}}_\Gamma \{ e_1' (| \nabla_\Gamma C |^2 ) \nabla_\Gamma C \} = 
\begin{cases}
\Delta_\Gamma C \text{ if } e_1 ( r ) = r,\\
{\rm{div}}_\Gamma \{ | {\rm{grad}}_\Gamma C |^{2 p} {\rm{grad}}_\Gamma C  \} \text{ if }e_1 (r) = \frac{ r^{p + 1}}{p+1},\\
{\rm{div}}_\Gamma \left\{ \frac{ {\rm{grad}}_\Gamma C}{1 + | {\rm{grad}}_\Gamma C |^2} \right\} \text{ if } e_1 ( r ) = \log ( 1 + r ).
\end{cases}
\end{equation*}
Here $\Delta_\Gamma$ is the \emph{Laplace-Beltrami operator} defined by $\Delta_\Gamma f = (\partial_1^\Gamma)^2 f + (\partial_2^\Gamma)^2 f + (\partial_3^\Gamma)^2 f$. See Section \ref{sect4} for our mathematical modeling of \eqref{eq11} and \eqref{eq12}.

The second one is to make a mathematical model for a diffusion process on an evolving double bubble (see Fig. \ref{Fig1}) from an energetic point of view. More precisely, we apply an energetic variational approach to derive the following diffusion system on the evolving double bubble:
\begin{equation}\label{eq17}
\begin{cases}
D_t^{w_A} C_A + ({\rm{div}}_{\Gamma_A} w_A ) C_A = {\rm{div}}_{\Gamma_A} \{ \kappa_A {\rm{grad}}_{\Gamma_A} C_A \} & \text{ on } \Gamma_A^T,\\
D_t^{w_B} C_B + ({\rm{div}}_{\Gamma_B} w_B ) C_B = {\rm{div}}_{\Gamma_B} \{ \kappa_B {\rm{grad}}_{\Gamma_B} C_B \} & \text{ on } \Gamma_B^T,\\
D_t^{w_S} C_S + ({\rm{div}}_{\Gamma_S} w_S ) C_S = {\rm{div}}_{\Gamma_S} \{ \kappa_S {\rm{grad}}_{\Gamma_S} C_S \} & \text{ on } \Gamma_S^T,\\
C_A|_{\partial \Gamma_A} = C_B|_{\partial \Gamma_B} = C_S|_{\partial \Gamma_S},\\
\kappa_A \frac{\partial C_A}{\partial \nu_A} \big|_{\partial \Gamma_A} + \kappa_B \frac{ \partial C_B}{\partial \nu_B} \big|_{\partial \Gamma_B} + \kappa_S \frac{\partial C_S}{\partial \nu_S} \big|_{\partial \Gamma_S} =0,
\end{cases}
\end{equation}
where $D_t^{w_\sharp} f = \partial_t f + (w_\sharp \cdot \nabla ) f$, ${\partial C_\sharp}/{\partial \nu_\sharp} = ( \nu_\sharp \cdot \nabla_{\Gamma_\sharp} ) C_\sharp $,
\begin{align*}
\Gamma_\sharp^T = \bigcup_{0 < t < T} \{ \Gamma_\sharp (t) \times \{ t \} \}, { \ }\partial \Gamma_\sharp^T = \bigcup_{0 < t < T} \{ \partial \Gamma_\sharp (t) \times \{ t \} \}, { \ }\sharp = A,B, S. 
\end{align*}
Here $\Gamma_A (t)$, $\Gamma_B (t)$, $\Gamma_S (t)$ are three evolving surfaces with a boundary (see Section \ref{sect5} for the assumptions of $\Gamma_A (t)$, $\Gamma_B (t)$, and $\Gamma_S (t)$), $C_\sharp = C_\sharp (x, t)$ denotes the \emph{concentration of a substance} on $\Gamma_\sharp (t)$, $\kappa_\sharp = \kappa_\sharp (x, t)$ denotes the \emph{diffusion coefficient of the substance} on $\Gamma_\sharp (t)$, $w_\sharp = w_\sharp (x,t) = { }^t ( w_1^\sharp , w_2^\sharp , w_3^\sharp )$ denotes the \emph{motion velocity} of $\Gamma_\sharp (t)$, and $\nu_\sharp = \nu_\sharp (x,t) = { }^t ( \nu_1^\sharp , \nu_2^\sharp , \nu_3^\sharp )$ denotes the unit outer co-normal vector to $\partial \Gamma_\sharp (t)$. This paper applies an energetic variational approach and both transport and divergence theorems on the evolving double bubble to derive the system \eqref{eq17}. Remark that the system \eqref{eq17} satisfies the following conservation law$(t_1 < t_2)$:
\begin{equation}\label{eq18}
\sum_{\sharp = A,B,S} \int_{\Gamma_\sharp (t_2)} C_\sharp ( x, t_2 ) { \ }  d \mathcal{H}^2_x = \sum_{\sharp = A,B,S} \int_{\Gamma_\sharp (t_1)} C_\sharp ( x , t_1) { \ }  d \mathcal{H}^2_x,
\end{equation}
and the following energy law$(t_1 < t_2)$:
\begin{multline}\label{eq19}
\sum_{\sharp = A,B,S} \int_{\Gamma_\sharp (t_2)} \frac{1}{2} |C_\sharp |^2 { \ }  d \mathcal{H}^2_x + \sum_{\sharp = A,B,S} \int_{t_1}^{t_2} \int_{\Gamma_\sharp ( \tau )} \kappa_\sharp | \nabla_{\Gamma_\sharp} C_\sharp |^2 { \ } d \mathcal{H}^2_x d \tau\\
 = \sum_{\sharp = A,B,S} \int_{\Gamma_\sharp (t_1)} \frac{1}{2} | C_\sharp |^2 { \ }  d \mathcal{H}^2_x.
\end{multline}
Here $\sum_{\sharp = A,B,S} f_\sharp = f_A + f_B + f_S$. Remark also that the system \eqref{eq17} is equivalent to
\begin{equation}\label{Eq110}
D_t^{N_\sharp} C_\sharp + {\rm{div}}_{\Gamma_\sharp} \{ C_\sharp w_\sharp - \kappa_\sharp {\rm{grad}}_{\Gamma_\sharp} C_\sharp \} = 0 \text{ on } \Gamma_\sharp^T,
\end{equation}
where $\sharp = A , B ,S$, and $D_t^{N_\sharp} f = \partial_t f + ( w_\sharp \cdot n_\sharp ) (n_\sharp \cdot \nabla_{\Gamma_\sharp} ) f $. Here $n_\sharp = n_\sharp (x,t) = { }^t (n_1^\sharp , n_2^\sharp , n_3^\sharp )$ denotes the unit outer normal vector to $\overline{\Gamma_\sharp (t)}$. See Section \ref{sect6} for our mathematical modeling of \eqref{eq17}.

The third one is to derive the following surface divergence theorem:
\begin{equation*}
\int_{\Gamma (t)} {\rm{div}}_\Gamma \varphi { \ }d \mathcal{H}^2_x = - \int_{\Gamma (t)} H_\Gamma (n \cdot \varphi ) { \ } d \mathcal{H}^2_x + \int_{\partial \Gamma (t)} \nu \cdot \varphi { \ } d \mathcal{H}^1_x
\end{equation*}
for every $\varphi = { }^t ( \varphi_1, \varphi_2 , \varphi_3 ) \in [ C^1 ( \overline{\Gamma (t)}) ]^3$. Here $H_\Gamma = H_\Gamma (x,t)$ denotes the \emph{mean curvature} in the direction $n$ and $d \mathcal{H}_x^k$ denotes the \emph{$k$-dimensional Hausdorff measure}. Especially, this paper is devoted to deriving the representation formula for the unit outer co-normal vector to the boundary of the surface $\Gamma (t)$. See Theorem \ref{thm23} for details.

Let us state three key ideas for making a mathematical model for a diffusion process on an evolving double bubble. The first one is to use an energetic variational approach. An energetic variational approach is the way to derive PDEs from an energetic point of view (see Strutt \cite{Str73} and Onsager (\cite{Ons31a}, \cite{Ons31b})). This paper focuses on the energy dissipation due to surface diffusion, and calculates a variation of the dissipation energy to derive the diffusion terms of our system. The second one is to divide the double bubble into some evolving surfaces with boundaries (see Fig. \ref{Fig2}). By dividing the double bubble into some evolving surfaces with boundaries, we can make use of surface transport and divergence theorems on an evolving surface with a boundary to derive the transport and divergence theorems on the evolving double bubble. The third one is to derive the representation formula for the unit outer co-normal vector to the boundary of a surface. Using the representation formula for the unit outer co-normal vector, we can investigate the boundary conditions for our systems.

Now we state some references on mathematical modeling of thermodynamics and fluid dynamics on an evolving surface. Dziuk-Elliott \cite{DE07} used the surface transport theorem and their flux to make their fluid system on an evolving closed surface. Koba-Liu-Giga \cite{KLG17} and Koba \cite{K18} applied their energetic variational approach to derive their fluid systems on an evolving closed surface. This paper make use of an energetic variational approach to derives the generalized diffusion and heat systems on an evolving surface with a boundary, and makes a mathematical model for a diffusion process on an evolving double bubble. Recently, Koba \cite{K19} applied their energetic variational approach to derive the compressible fluid system on an evolving surface with a boundary.

The outline of this paper is as follow: In Section \ref{sect2} we state the definition of an evolving surface with a boundary and the main results of this paper. In Section \ref{sect3} we study the surface divergence theorem and the unit outer co-normal vector to the boundary of a surface. In Section \ref{sect4} we drive the diffusion system \eqref{eq11} and the heat system \eqref{eq12} on an evolving surface with a boundary, and investigate the boundary conditions for the two systems to consider the conservation and energy laws of them. In Section \ref{sect5} we study the divergence and transport theorems on an evolving double bubble. In Section \ref{sect6} we apply an energetic variational approach to derive the diffusion system \eqref{eq17} on an evolving double bubble, and investigate the conservation and energy laws of the system.

\section{Main Results}\label{sect2}

We first introduce the definition of an evolving surface with a boundary and notation. Then, we state the main results of this paper.

\begin{definition}[Bounded domain with a boundary]\label{def21}{ \ }
Let $U \subset \mathbb{R}^2$ be a bounded domain. We call $U$ a \emph{bounded domain with a piecewise Lipschitz-continuous boundary} if the boundary $\partial U$ of $U$ can be written by
\begin{equation*}
\partial U = \partial U^1 \cup \partial U^2 \cup \cdots \cup \partial U^M \text{ for some } M \in \mathbb{N}.
\end{equation*}
Here
\begin{equation*}
\partial U^m = \{ X = { }^t ( X_1 , X_2 ) \in \mathbb{R}^2; { \ } X_1 = p_m ( \ell ), X_2 = q_m ( \ell ), { \ }\ell \in [a_m , b_m ] \},
\end{equation*}
where $m \in \{1,2, \cdots, M\}$, $a_m , b_m \in \mathbb{R}$, $p_m, q_m \in C^{0,1}([ a_m , b_m ])$, satisfying $(\mathrm{i})$ $(p_{m} (b_m ) , q_{m} ( b_m)) = (p_{m + 1} (a_{m+1} ) , q_{m+1} (a_{m+1}))$ for each $m \in \{1 ,2 , \cdots, M - 1 \}$, $(\mathrm{ii})$ $(p_{M} (b_M ) , q_{M} (b_M)) = (p_1 (a_1 ) , q_1 (a_1))$, $(\mathrm{iii})$ 
\begin{equation*}
\left( \frac{d p_m }{d \ell} ( \ell ) \right)^2 + \left( \frac{d q_m}{d \ell} ( \ell ) \right)^2 > 0 \text{ for almost all } \ell \in [ a_m , b_m ].
\end{equation*}
\end{definition}

\begin{definition}[Evolving surface with a boundary]\label{def22}
For $0 \leq t < T$, let $\Gamma (t) \subset \mathbb{R}^3$ be a set. We call $\Gamma (t) (= \{ \Gamma (t) \}_{0 \leq t <T})$ an \emph{evolving surface with a piecewise Lipschitz-continuous boundary} if there are bounded domain $U$ with a piecewise Lipschitz-continuous boundary and $\widehat{x} = { }^t ( \widehat{x}_1 , \widehat{x}_2 , \widehat{x}_3) \in [ C^2 ( \overline{U} \times [0,T ))]^3$ satisfying the properties as in Definition \ref{def21} and the following three properties hold:\\
$(\mathrm{i})$ For every $0 \leq t < T$, $\Gamma (t)$ can be written by
\begin{equation*}
\Gamma (t) = \{ x = { }^t (x_1, x_2,x_3) \in \mathbb{R}^3; { \ }x = \widehat{x}(X , t) , X \in U \}.
\end{equation*}
$(\mathrm{ii})$ For each $0 < t <T$,
\begin{equation*}
\widehat{x} ( \cdot , t) : U \to \Gamma (t) \text{ is bijective}.
\end{equation*}
$(\mathrm{iii})$ For every $0 \leq t < T$, the boundary $\partial \Gamma (t)$ of $\Gamma (t)$ can be written by
\begin{equation*}
\partial \Gamma (t) = \partial \Gamma^1 (t) \cup \partial \Gamma^2 (t) \cup \cdots \cup \partial \Gamma^M (t),
\end{equation*}
where
\begin{equation*}
\partial \Gamma^m (t) = \{ x \in \mathbb{R}^3; { \ } x = \widehat{x} (X, t ), X \in \partial U^m \}.
\end{equation*}
$(\mathrm{iv})$ For every $0 \leq t < T$ and $X \in \overline{U}$,
\begin{multline*}
\left( \frac{\partial \widehat{x}_2}{\partial X_1} \frac{\partial \widehat{x}_3}{\partial X_2} - \frac{\partial \widehat{x}_2}{\partial X_2} \frac{\partial \widehat{x}_3}{\partial X_1} \right)^2 + \left( \frac{\partial \widehat{x}_1}{\partial X_1} \frac{\partial \widehat{x}_3}{\partial X_2} - \frac{\partial \widehat{x}_1}{\partial X_2} \frac{\partial \widehat{x}_3}{\partial X_1} \right)^2\\
 + \left( \frac{\partial \widehat{x}_1}{\partial X_1} \frac{\partial \widehat{x}_2}{\partial X_2} - \frac{\partial \widehat{x}_1}{\partial X_2} \frac{\partial \widehat{x}_2}{\partial X_1} \right)^2 > 0.
\end{multline*}
Here
\begin{equation*}
C^2 ( \overline{U} \times [0,T) ) := \{ f : \overline{U} \times [0,T) \to \mathbb{R}; { \ }f = \mathcal{F}|_{\overline{U} \times [0,T)} \text{ for some } \mathcal{F} \in C^2 ( \mathbb{R}^3) \} .
\end{equation*}
\end{definition}

Let us explain the conventions used in this paper. We use the Greek characters $\alpha , \beta$ as $1.2$. Moreover, we often use the following Einstein summation convention: $c_\alpha d_{\alpha \beta} = \sum_{\alpha = 1}^2 c_\alpha d_{\alpha \beta}$ and $c_\beta d^{\alpha \beta} = \sum_{\beta = 1}^2 c_\beta d^{\alpha \beta}$.

Next we define notation. Let $\Gamma (t) (= \{ \Gamma (t) \}_{0 \leq t <T})$ be an evolving surface with a piecewise Lipschitz-continuous boundary. By definition, there are bounded domain $U \subset \mathbb{R}^2$ with a a piecewise Lipschitz-continuous boundary and $\widehat{x} = \widehat{x} (X,t) = { }^t ( \widehat{x}_1 , \widehat{x}_2 , \widehat{x}_3 ) \in [ C^2 ( \overline{U} \times [0, T ) ) ]^3$ satisfying the properties as in Definitions \ref{def21} and \ref{def22}. The symbol $n = n (x , t) = { }^t (n_1 , n_2 , n_3)$ denotes the unit outer normal vector at $x \in \overline{\Gamma (t)}$, the symbol $\nu = \nu ( x, t) = { }^t ( \nu_1 , \nu_2 , \nu_3)$ denotes the unit outer co-normal vector at $x \in \partial \Gamma (t)$, and the symbol $n^U = n^U (X) = { }^t (n_1^U , n_2^U)$ denotes the unit outer normal vector at $X \in \partial U$. Note that $\nu (x,t)$ exists for almost all $x \in \partial \Gamma (t)$ and $n^U (X)$ exists for almost all $X \in \partial U$ since $\partial \Gamma (t)$ and $\partial U$ are piecewise Lipschitz-continuous boundaries. For every $0 \leq t < T$ and $X \in \overline{U}$,
\begin{multline*}
\mathfrak{g}_1 = \mathfrak{g}_1 (X,t) := \frac{\partial \widehat{x}}{\partial X_1},{ \ }\mathfrak{g}_2 = \mathfrak{g}_2 (X,t) := \frac{\partial \widehat{x}}{\partial X_2},{ \ }\mathfrak{g}_{\alpha \beta} := \mathfrak{g}_\alpha \cdot \mathfrak{g}_\beta,\\
( \mathfrak{g}^{ \alpha \beta })_{2 \times 2} := ( \mathfrak{g}_{\alpha \beta })_{2 \times 2}^{-1} = \frac{1}{\mathfrak{g}_{11} \mathfrak{g}_{22} - \mathfrak{g}_{12} \mathfrak{g}_{21} } 
\begin{pmatrix}
\mathfrak{g}_{22} & - \mathfrak{g}_{21}\\
- \mathfrak{g}_{12} & \mathfrak{g}_{11}
\end{pmatrix},{ \ }\mathfrak{g}^\alpha := \mathfrak{g}^{\alpha \beta} \mathfrak{g}_\beta,\\
\mathcal{G} = \mathcal{G} ( X , t) := \mathfrak{g}_{11} \mathfrak{g}_{22} - \mathfrak{g}_{12} \mathfrak{g}_{21}.
\end{multline*}
Note that $\mathfrak{g}^{\alpha} \cdot \mathfrak{g}_\beta = \delta_{ \alpha \beta}$, $\mathfrak{g}^{\alpha \beta} = \mathfrak{g}^\alpha \cdot \mathfrak{g}^\beta$,
\begin{multline*}
\mathcal{G} = | \mathfrak{g}_1 \times \mathfrak{g}_2 |^2 =\mathfrak{g}_{11} \mathfrak{g}_{22} - \mathfrak{g}_{12} \mathfrak{g}_{21} = \left( \frac{\partial \widehat{x}_2}{\partial X_1} \frac{\partial \widehat{x}_3}{\partial X_2} - \frac{\partial \widehat{x}_2}{\partial X_2} \frac{\partial \widehat{x}_3}{\partial X_1} \right)^2\\
 + \left( \frac{\partial \widehat{x}_1}{\partial X_1} \frac{\partial \widehat{x}_3}{\partial X_2} - \frac{\partial \widehat{x}_1}{\partial X_2} \frac{\partial \widehat{x}_3}{\partial X_1} \right)^2   + \left( \frac{\partial \widehat{x}_1}{\partial X_1} \frac{\partial \widehat{x}_2}{\partial X_2} - \frac{\partial \widehat{x}_1}{\partial X_2} \frac{\partial \widehat{x}_2}{\partial X_1} \right)^2 > 0.
\end{multline*}
Here $\delta_{\alpha \beta}$ denotes the Kronecker delta. Note also that
\begin{equation*}
n ( \widehat{x} (X , t) , t) = \frac{\mathfrak{g}_1 \times \mathfrak{g}_2}{ |\mathfrak{g}_1 \times \mathfrak{g}_2| } = 
\frac{1}{ \sqrt{ \mathfrak{g}_{11} \mathfrak{g}_{22} - \mathfrak{g}_{12} \mathfrak{g}_{21} } }
\begin{pmatrix}
\frac{\partial \widehat{x}_2}{\partial X_1} \frac{\partial \widehat{x}_3}{\partial X_2} - \frac{\partial \widehat{x}_2}{\partial X_2} \frac{\partial \widehat{x}_3}{\partial X_1}\\
\frac{\partial \widehat{x}_1}{\partial X_1} \frac{\partial \widehat{x}_3}{\partial X_2} - \frac{\partial \widehat{x}_1}{\partial X_2} \frac{\partial \widehat{x}_3}{\partial X_1} \\
\frac{\partial \widehat{x}_1}{\partial X_1} \frac{\partial \widehat{x}_2}{\partial X_2} - \frac{\partial \widehat{x}_1}{\partial X_2} \frac{\partial \widehat{x}_2}{\partial X_1}
\end{pmatrix}.
\end{equation*}
See Jost \cite{Jos11} and Ciarlet \cite{Cia05} for differential geometry and the Riemannian manifold.

Let us introduce function spaces and notation. Set
\begin{align*}
\Gamma_T & = \bigcup_{0 <  t  < T}\{ \Gamma (t) \times \{ t \} \},{ \ }\partial \Gamma_T = \bigcup_{0 <  t  < T}\{ \partial \Gamma (t) \times \{ t \} \},\\
\overline{\Gamma_T} & = \bigcup_{0 \leq  t  < T}\{ \overline{\Gamma (t)} \times \{ t \} \},
\end{align*}
and for $k \in \mathbb{N} \cup \{ 0 , \infty \}$,
\begin{align*}
C^k ( \Gamma (t)) & := \{ f : \Gamma (t) \to \mathbb{R}; { \ }f = \mathcal{F}|_{\Gamma (t)} \text{ for some } \mathcal{F} \in C^k ( \mathbb{R}^3) \},\\
C^k ( \overline{\Gamma (t)}) & := \{ f : \overline{\Gamma (t)} \to \mathbb{R}; { \ }f = \mathcal{F}|_{\overline{\Gamma (t)}} \text{ for some } \mathcal{F} \in C^k ( \mathbb{R}^3) \},\\
C_0^k (\Gamma (t)) & := \{ f \in C^k (\Gamma (t)); { \ }\text{supp} f \text{ does not intersect } \partial \Gamma (t) \},\\
C^k ( \overline{ \Gamma_T} ) & := \{ f : \overline{\Gamma_T} \to \mathbb{R}; { \ } f = \mathcal{F}|_{\overline{\Gamma_T}} \text{ for some } \mathcal{F} \in C^k ( \mathbb{R}^4) \},\\
C^k ( \partial \Gamma (t) ) & := \{ f : \partial \Gamma (t) \to \mathbb{R}; { \ } f = \mathcal{F}|_{\partial \Gamma (t)} \text{ for some } \mathcal{F} \in C^k ( \mathbb{R}^3) \},\\
C^k ( \partial \Gamma_T ) & := \{ f :\partial \Gamma_T \to \mathbb{R}; { \ } f = \mathcal{F}|_{\partial \Gamma_T} \text{ for some } \mathcal{F} \in C^k ( \mathbb{R}^4) \}.
\end{align*}
Fix $j \in \{ 1 , 2 , 3 \}$. For $f \in C^1 ( \overline{\Gamma (t)} )$ and $\varphi = { }^t ( \varphi_1, \varphi_2 , \varphi_3) \in [C^1 ( \overline{ \Gamma (t) })]^3$,
\begin{align*}
\partial_j^\Gamma f & := \sum_{i=1}^3 (\delta_{i j} - n_i n_j ) \partial_i f,\\ 
\nabla_\Gamma & :=  { }^t ( \partial_1^\Gamma , \partial_2^\Gamma , \partial_3^\Gamma ),\\
{\rm{grad}}_\Gamma f & := \nabla_\Gamma f = { }^t (\partial_1^\Gamma f, \partial_2^\Gamma f, \partial_3^\Gamma f) ,\\
{\rm{div}}_\Gamma \varphi & := \nabla_\Gamma \cdot \varphi = \partial_1^\Gamma \varphi_1 + \partial_2^\Gamma \varphi_2 + \partial_3^\Gamma \varphi_3,\\
\frac{\partial f}{\partial \nu} & := ( \nu \cdot \nabla_\Gamma ) f . 
\end{align*}
Here $\delta_{ij}$ is the Kronecker delta. The symbol $H_\Gamma = H_\Gamma (x , t )$ denotes the mean curvature in the direction $n$ defined by
\begin{equation*}
H_\Gamma = - {\rm{div}}_\Gamma n .
\end{equation*}
Throughout this paper we assume that there is $w = w (x, t) = { }^t ( w_1 , w_2 , w_3) \in C ( \overline{\Gamma}_T)$ such that for $0 < t < T$ and $X  \in \overline{U}$,
\begin{equation*}
w ( \widehat{x} (X,t) , t) = \frac{\partial \widehat{x} }{\partial t} (X,t) .
\end{equation*}
We call $w$ the \emph{motion velocity} of the evolving surface $\Gamma (t)$.

Now we state the main results of this paper.
\begin{theorem}[Surface divergence theorem]\label{thm23}{ \ }\\
For every $\varphi = { }^t ( \varphi_1 , \varphi_2 , \varphi_3 ) \in C^1 ( \overline{ \Gamma (t)})$,
\begin{equation}\label{eq21}
\int_{\Gamma (t)} {\rm{div}}_\Gamma \varphi { \ }d \mathcal{H}^2_x = - \int_{\Gamma (t)} H_\Gamma (n \cdot \varphi) { \ } d \mathcal{H}^2_x + \int_{\partial \Gamma (t)} \nu \cdot \varphi { \ } d \mathcal{H}^1_x.
\end{equation}
Here the unit outer co-normal vector $\nu = \nu (x,t) = { }^t ( \nu_1 , \nu_2 , \nu_3)$ is given by
\begin{equation}\label{eq22}
\nu = \nu ( \widehat{x} (X,t) , t ) = \frac{n_1^U \mathfrak{g}_2 - n_2^U \mathfrak{g}_1}{ | n_1^U \mathfrak{g}_2 -n_2^U \mathfrak{g}_1| } \times \frac{\mathfrak{g}_1 \times \mathfrak{g}_2}{ | \mathfrak{g}_1 \times \mathfrak{g}_2 | } \text{ for a.e. } X \in \partial U,
\end{equation}
where $n^U = { }^t (n^U_1, n^U_2)$ is the unit outer normal vector to $\partial U$ and $\mathfrak{g}_\alpha = {\partial \widehat{x}}/{\partial X_\alpha}$. Note that $\nu \cdot \nu =1$ and $\nu \cdot n = 0$ on $\partial \Gamma (t)$.
\end{theorem}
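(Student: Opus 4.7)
The plan is to pull back to the parameter domain $U$ via the map $\widehat{x}$, reduce the left-hand side to an integral over $U$ that the classical $\mathbb{R}^2$ divergence theorem can handle, and then recognize the resulting boundary integrand as $\nu \cdot \varphi$ with $\nu$ given by \eqref{eq22}. The Jacobian $\sqrt{\mathcal{G}}$ supplies the bridge $d\mathcal{H}^2_x = \sqrt{\mathcal{G}}\, dX$, and the piecewise-Lipschitz regularity of $\partial U$ is precisely what is needed to invoke the planar divergence theorem and to define $n^U$ (hence $\nu$) almost everywhere.

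The first step is to derive the local-coordinate identity
\[
{\rm{div}}_\Gamma \varphi = \frac{1}{\sqrt{\mathcal{G}}} \partial_\alpha \bigl( \sqrt{\mathcal{G}} \, \mathfrak{g}^\alpha \cdot \varphi \bigr) - H_\Gamma (n \cdot \varphi),
\]
evaluated at $\widehat{x}(X,t)$. For any $f \in C^1(\overline{\Gamma(t)})$ the tangential gradient has the representation $\nabla_\Gamma f = \mathfrak{g}^\alpha \, \partial_\alpha (f \circ \widehat{x})$, since both sides are tangent to $\Gamma(t)$ and have the same pairings with $\mathfrak{g}_1, \mathfrak{g}_2$. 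Applied componentwise, this gives ${\rm{div}}_\Gamma \varphi = \mathfrak{g}^\alpha \cdot \partial_\alpha \varphi$, and a product-rule rearrangement together with the identity $\partial_\alpha(\sqrt{\mathcal{G}}\, \mathfrak{g}^\alpha) = \sqrt{\mathcal{G}}\, H_\Gamma n$ (equivalently $\Delta_\Gamma \widehat{x} = H_\Gamma n$) yields the displayed formula. Multiplying through by $\sqrt{\mathcal{G}}$, integrating over $U$, and applying the planar divergence theorem to the first term then gives
\[
\int_{\Gamma(t)} {\rm{div}}_\Gamma \varphi \, d\mathcal{H}^2_x = -\int_{\Gamma(t)} H_\Gamma (n \cdot \varphi) \, d\mathcal{H}^2_x + \int_{\partial U} \sqrt{\mathcal{G}} \, n_\alpha^U (\mathfrak{g}^\alpha \cdot \varphi) \, d\mathcal{H}^1_X.
\]

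What remains, and what I expect to be the computational heart of the proof, is to match the last integral with $\int_{\partial \Gamma(t)} \nu \cdot \varphi \, d\mathcal{H}^1_x$ using \eqref{eq22}. Setting $T := n_1^U \mathfrak{g}_2 - n_2^U \mathfrak{g}_1$, one first checks that $T = \tau_\alpha^U \mathfrak{g}_\alpha$, where $\tau^U = {}^t(-n_2^U, n_1^U)$ is the counterclockwise unit tangent to $\partial U$; hence $T$ is tangent to $\partial \Gamma(t)$ and $|T|\, d\mathcal{H}^1_X = d\mathcal{H}^1_x$. The vector $\nu := (T/|T|) \times n$ is then a unit vector orthogonal to $n$, yielding $\nu \cdot \nu = 1$ and $\nu \cdot n = 0$ immediately, with outward orientation coming from the right-hand rule applied to the triple (tangent to $\partial \Gamma$, normal to $\Gamma$, co-normal). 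The crucial calculation is the identity
\[
T \times n = \sqrt{\mathcal{G}} \, n_\alpha^U \, \mathfrak{g}^\alpha,
\]
which I would establish by expanding $T \times (\mathfrak{g}_1 \times \mathfrak{g}_2)/\sqrt{\mathcal{G}}$ with $A \times (B \times C) = B(A\cdot C) - C(A\cdot B)$ and comparing the resulting coefficients of $\mathfrak{g}_1, \mathfrak{g}_2$ against the cofactor expression for $(\mathfrak{g}^{\alpha\beta}) = (\mathfrak{g}_{\alpha\beta})^{-1}$. Dotting with $\varphi$, dividing by $|T|$, and multiplying by $d\mathcal{H}^1_x = |T|\, d\mathcal{H}^1_X$ converts the boundary integral above into $\int_{\partial \Gamma(t)} \nu \cdot \varphi \, d\mathcal{H}^1_x$, which closes \eqref{eq21}.
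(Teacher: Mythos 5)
Your proposal is correct and follows essentially the same route as the paper: pull back to $U$ via $\mathrm{div}_\Gamma \varphi = \mathfrak{g}^\alpha\cdot\partial_\alpha\varphi$ (the paper's Lemma \ref{lem31}), apply the planar Gauss theorem, identify the line element and the boundary integrand through $\sqrt{\mathcal{G}}\,n^U_\alpha\mathfrak{g}^\alpha=(n_1^U\mathfrak{g}_2-n_2^U\mathfrak{g}_1)\times n$ (the paper's Lemmas \ref{lem32}(i) and \ref{lem33}), and absorb the remaining term into $-H_\Gamma(n\cdot\varphi)$. The only organizational difference is that you extract the curvature term before integrating by parts via the identity $\partial_\alpha(\sqrt{\mathcal{G}}\,\mathfrak{g}^\alpha)=\sqrt{\mathcal{G}}\,H_\Gamma n$ (i.e.\ $\Delta_\Gamma\widehat{x}=H_\Gamma n$, whose sign is consistent with $H_\Gamma=-\mathrm{div}_\Gamma n$), whereas the paper proves exactly this fact afterwards through the Weingarten formula in Lemma \ref{lem32}(ii)--(iii); the content is the same.
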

\noindent Simon \cite{Sim83} derived \eqref{eq21} without \eqref{eq22} in the case when $\Gamma (t)$ is a fixed compact $C^2$-manifold with a smooth boundary. This paper deals with surfaces with a piecewise Lipschitz-continuous boundary and gives another proof of Theorem \ref{thm23}. Moreover, this paper derives the representation formula \eqref{eq22} for the unit outer co-normal vector to the boundary of the surface $\Gamma (t)$.

\begin{theorem}[Applications of surface divergence theorem]\label{thm24}{ \ }\\
Let $\Gamma (t)$ be an evolving surface. Let $\Gamma_\mathcal{A} (t)$, $\Gamma_\mathcal{B} (t)$, $\Gamma_\mathcal{C} (t)$ be three evolving surfaces with a piecewise Lipschitz-continuous boundary. The symbols $\nu_\mathcal{A}$, $\nu_\mathcal{B}$, and $\nu_\mathcal{C}$ denote the unit outer co-normal vector to $\partial \Gamma_\mathcal{A}$, $\partial \Gamma_\mathcal{B}$, and $\partial \Gamma_\mathcal{C}$, respectively. Suppose that $\nu_{\mathcal{A}} = - \nu_{\mathcal{B}} \text{ on } \partial \Gamma_{\mathcal{A}} (t) \cap \partial \Gamma_{\mathcal{B}} (t)$,
\begin{align*}
\nu_{\mathcal{B}}= - \nu_{\mathcal{C}} \text{ on }\partial \Gamma_{\mathcal{B}} (t) \cap \partial \Gamma_{\mathcal{C}} (t),{ \ }\nu_{\mathcal{C}} = - \nu_{\mathcal{A}}  \text{ on }\partial \Gamma_{\mathcal{C}} (t) \cap \partial \Gamma_{\mathcal{A}} (t),\\
\Gamma_{\mathcal{A}} (t) \cap \Gamma_\mathcal{B} (t) = \Gamma_{\mathcal{B}} (t) \cap \Gamma_{\mathcal{C}} (t) = \Gamma_{\mathcal{C}} (t) \cap \Gamma_{\mathcal{A}} (t) = \emptyset.
\end{align*}
If
\begin{equation*}
\Gamma (t) = \overset{\circ}{\overline{ \Gamma_{\mathcal{A}} (t) \cup \Gamma_{\mathcal{B}} (t)} } \text{ or }\Gamma (t) = \overset{\circ}{\overline{ \Gamma_{\mathcal{A}} (t) \cup \Gamma_{\mathcal{B}} (t) \cup \Gamma_{\mathcal{C}} (t)  } }, 
\end{equation*}
then \eqref{eq21} holds for all $\varphi = { }^t ( \varphi_1 , \varphi_2 , \varphi_3 ) \in C^1 ( \overline{ \Gamma (t)})$. Here $\nu := { }^t (0,0,0)$ if $\partial \Gamma (t) = \emptyset$, $\overline{\mathcal{M}} $ is the closure of the set $\mathcal{M}$, and $\overset{\circ}{\mathcal{M}}$ is the open kernel of the set $\mathcal{M}$. 
\end{theorem}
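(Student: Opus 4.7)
The plan is to reduce Theorem \ref{thm24} to Theorem \ref{thm23} applied on each piece separately, with the contributions along shared boundaries cancelling by hypothesis. Throughout I would use the fact that since each $\Gamma_\sharp (t)$ is an open subset of the same $C^2$-surface $\Gamma (t)$, the outer unit normal $n$ and the mean curvature $H_\Gamma$ restricted to $\Gamma_\sharp (t)$ agree with the corresponding quantities intrinsic to $\Gamma_\sharp (t)$; the same is then true for ${\rm{div}}_\Gamma$ and $\nabla_\Gamma$. Applying \eqref{eq21} on $\Gamma_\sharp (t)$ therefore gives
\begin{equation*}
\int_{\Gamma_\sharp (t)} {\rm{div}}_\Gamma \varphi { \ }d \mathcal{H}^2_x = - \int_{\Gamma_\sharp (t)} H_\Gamma (n \cdot \varphi) { \ } d \mathcal{H}^2_x + \int_{\partial \Gamma_\sharp (t)} \nu_\sharp \cdot \varphi { \ } d \mathcal{H}^1_x
\end{equation*}
for each $\sharp$.

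For the two-piece case $\Gamma (t) = \overset{\circ}{\overline{ \Gamma_{\mathcal{A}} (t) \cup \Gamma_{\mathcal{B}} (t)} }$, I would sum these identities over $\sharp \in \{\mathcal{A},\mathcal{B}\}$. Since $\Gamma(t)$ differs from $\Gamma_\mathcal{A}(t) \cup \Gamma_\mathcal{B}(t)$ only by the shared curve $\partial\Gamma_\mathcal{A}(t) \cap \partial\Gamma_\mathcal{B}(t)$, which is $\mathcal{H}^2_x$-null, the two surface integrals on the left assemble into $\int_{\Gamma (t)} {\rm{div}}_\Gamma \varphi { \ }d \mathcal{H}^2_x$, and likewise for the mean-curvature term. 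For the boundary contribution I would use the decomposition
\begin{equation*}
\partial \Gamma_\mathcal{A}(t) \cup \partial \Gamma_\mathcal{B}(t) = \partial \Gamma (t) \cup \bigl( \partial \Gamma_\mathcal{A}(t) \cap \partial \Gamma_\mathcal{B}(t) \bigr)
\end{equation*}
(valid up to an $\mathcal{H}^1_x$-null set) together with the observation that, on each non-shared part of $\partial \Gamma_\sharp (t)$, the outer co-normal of $\Gamma (t)$ coincides with $\nu_\sharp$. The hypothesis $\nu_\mathcal{A} = -\nu_\mathcal{B}$ on the shared curve kills the remaining contribution there, and \eqref{eq21} for $\Gamma (t)$ follows.

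The three-piece case is handled in exactly the same way: after summing \eqref{eq21} over $\sharp \in \{\mathcal{A},\mathcal{B},\mathcal{C}\}$, the integrals along each of the three pairwise-shared curves carry integrands proportional to $\nu_\sharp + \nu_{\sharp'}$, which vanish by the three sign relations, and what remains of the three boundary integrals reassembles, modulo $\mathcal{H}^1_x$-null sets, into $\int_{\partial \Gamma(t)} \nu \cdot \varphi { \ } d\mathcal{H}^1_x$. The special case $\partial \Gamma(t) = \emptyset$ is covered by the stated convention $\nu := {}^t (0,0,0)$, which simply makes the boundary term vanish.

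The step I expect to require the most care is the set-theoretic bookkeeping of the boundaries: rigorously proving the decomposition above, and checking that at a point of $\partial \Gamma(t) \cap \partial \Gamma_\sharp(t)$ the outer co-normal of $\Gamma(t)$ really equals $\nu_\sharp$ rather than $-\nu_\sharp$. For this I would apply the representation formula \eqref{eq22} in a local parametrization $\widehat{x}_\sharp$ of $\Gamma_\sharp(t)$ near such a point; since only the one piece is locally present there, the same parametrization also covers $\Gamma(t)$ locally, and the formula produces the same vector $\nu = \nu_\sharp$ from both sides.
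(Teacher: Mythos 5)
Your proposal is correct and follows essentially the same route as the paper: apply Theorem \ref{thm23} on each piece $\Gamma_\sharp (t)$, sum, note that the interior and mean-curvature integrals reassemble over $\Gamma (t)$, and cancel the integrals over the shared boundary curves using the hypotheses $\nu_{\mathcal{A}} = -\nu_{\mathcal{B}}$, $\nu_{\mathcal{B}} = -\nu_{\mathcal{C}}$, $\nu_{\mathcal{C}} = -\nu_{\mathcal{A}}$, so that only $\int_{\partial \Gamma (t)} \nu \cdot \varphi \, d\mathcal{H}^1_x$ survives. Your extra remarks on identifying $n$, $H_\Gamma$, and the co-normal of $\Gamma (t)$ with the corresponding quantities of the pieces via \eqref{eq22} are a more explicit treatment of steps the paper takes for granted, but the argument is the same.
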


Applying Theorem \ref{thm24}, we can derive the divergence theorem on evolving surfaces with boundaries and evolving closed surfaces, and can calculate a variation of our dissipation energies.
\begin{theorem}[Variation of dissipation energy]\label{thm25}{ \ }\\
Let $e_{D}$ be a $C^1$-function. Fix $0 \leq t <T$ and $f \in C^2 ( \overline{\Gamma (t) } )$. For every $- 1 < \varepsilon < 1$ and $\psi \in C_0^1 ( \Gamma (t))$,
\begin{equation*}
E_{D}[f + \varepsilon \psi ](t) := - \int_{\Gamma (t)} \frac{1}{2} e_{D} ( | \nabla_\Gamma (f + \varepsilon \psi ) |^2) { \ } d \mathcal{H}^2_x .
\end{equation*}
Then
\begin{equation*}
\frac{d}{d \varepsilon} \bigg|_{ \varepsilon = 0} E_{D} [ f + \varepsilon \psi ] = \int_{\Gamma (t)} {\rm{div}}_\Gamma \{ e_{D}'(|{\rm{grad}}_\Gamma f |^2 ) {\rm{grad}}_\Gamma f \} \psi { \ } d \mathcal{H}^2_x .
\end{equation*}
\end{theorem}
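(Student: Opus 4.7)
The plan is to differentiate under the integral sign (legitimate here because $t$ is fixed, so $\Gamma(t)$ is a fixed surface and only the integrand depends on $\varepsilon$), then eliminate $\nabla_\Gamma \psi$ by an integration-by-parts that uses the surface divergence theorem (Theorem \ref{thm23}). Since $\psi$ has compact support inside $\Gamma(t)$ and $e_1'(|\nabla_\Gamma f|^2)\nabla_\Gamma f$ is tangential, both boundary-type contributions in \eqref{eq21} will vanish.

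\textbf{Step 1: Differentiation in $\varepsilon$.} I would first apply the chain rule to the integrand. Since $|\nabla_\Gamma(f+\varepsilon\psi)|^2 = |\nabla_\Gamma f|^2 + 2\varepsilon \nabla_\Gamma f \cdot \nabla_\Gamma \psi + \varepsilon^2 |\nabla_\Gamma \psi|^2$, one has
\begin{equation*}
\frac{d}{d\varepsilon}\Big[\tfrac{1}{2} e_{D}(|\nabla_\Gamma(f+\varepsilon\psi)|^2)\Big]\bigg|_{\varepsilon=0} = e_{D}'(|\nabla_\Gamma f|^2)\, \nabla_\Gamma f \cdot \nabla_\Gamma \psi .
\end{equation*}
The smoothness assumptions ($e_D \in C^1$, $f \in C^2(\overline{\Gamma(t)})$, $\psi \in C_0^1(\Gamma(t))$) together with the boundedness of $\Gamma(t)$ justify passing the derivative inside the integral, yielding
\begin{equation*}
\frac{d}{d\varepsilon}\bigg|_{\varepsilon=0} E_{D}[f+\varepsilon\psi] = -\int_{\Gamma(t)} e_{D}'(|\nabla_\Gamma f|^2)\, \nabla_\Gamma f \cdot \nabla_\Gamma \psi \, d\mathcal{H}^2_x.
\end{equation*}

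\textbf{Step 2: Product rule and surface divergence theorem.} Set $V := e_{D}'(|\nabla_\Gamma f|^2)\, {\rm{grad}}_\Gamma f$. Because each $\partial_j^\Gamma$ is a first-order linear operator of the form $\sum_i (\delta_{ij}-n_i n_j)\partial_i$, the ordinary Leibniz rule carries over: ${\rm{div}}_\Gamma (\psi V) = \psi\, {\rm{div}}_\Gamma V + V \cdot \nabla_\Gamma \psi$. Substituting and applying Theorem \ref{thm23} to $\psi V$ gives
\begin{equation*}
\int_{\Gamma(t)} V \cdot \nabla_\Gamma \psi \, d\mathcal{H}^2_x = -\int_{\Gamma(t)} \psi\, {\rm{div}}_\Gamma V \, d\mathcal{H}^2_x - \int_{\Gamma(t)} H_\Gamma (n \cdot \psi V)\, d\mathcal{H}^2_x + \int_{\partial \Gamma(t)} \psi\, (\nu \cdot V)\, d\mathcal{H}^1_x.
\end{equation*}

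\textbf{Step 3: Vanishing of the boundary contributions.} The key observation is that $V$ is tangential to $\Gamma(t)$, since ${\rm{grad}}_\Gamma f = \sum_i (\delta_{ij}-n_i n_j)\partial_i f$ satisfies $n \cdot \nabla_\Gamma f = \sum_{i,j} n_j(\delta_{ij}-n_i n_j)\partial_i f = 0$ (using $|n|^2=1$). Hence $n \cdot V = 0$ and the mean-curvature term drops out. For the $\partial\Gamma(t)$-integral, $\psi \in C_0^1(\Gamma(t))$ means $\mathrm{supp}\,\psi$ does not meet $\partial\Gamma(t)$, so $\psi$ vanishes on $\partial\Gamma(t)$ and the co-normal boundary integral is zero. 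Combining Steps 1--3 yields
\begin{equation*}
\frac{d}{d\varepsilon}\bigg|_{\varepsilon=0} E_{D}[f+\varepsilon\psi] = \int_{\Gamma(t)} {\rm{div}}_\Gamma\{ e_{D}'(|{\rm{grad}}_\Gamma f|^2)\, {\rm{grad}}_\Gamma f\}\, \psi \, d\mathcal{H}^2_x,
\end{equation*}
which is the asserted identity. I do not anticipate any serious obstacle: the only subtlety is verifying tangentiality of $V$ and the product rule for $\nabla_\Gamma$, both of which are straightforward from the definition $\partial_j^\Gamma = \sum_i(\delta_{ij}-n_in_j)\partial_i$.
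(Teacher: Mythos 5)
Your proposal is correct and follows essentially the same route as the paper: differentiate under the integral to get $-\int_{\Gamma(t)} e_D'(|\nabla_\Gamma f|^2)\,\nabla_\Gamma f\cdot\nabla_\Gamma\psi\,d\mathcal{H}^2_x$, then integrate by parts via Theorem \ref{thm23}, with the mean-curvature term killed by $n\cdot\nabla_\Gamma f=0$ and the co-normal boundary term killed by $\psi\in C_0^1(\Gamma(t))$. The only difference is that you spell out the Leibniz rule for ${\rm{div}}_\Gamma(\psi V)$ explicitly, which the paper leaves implicit.
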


Next we state the surface transport theorem. 
\begin{proposition}[Surface transport theorem]\label{prop26}{ \ }\\
Let $f \in C^1 ( \Gamma_T)$. For each $\Omega (t) \subset \Gamma (t)$, 
\begin{equation*}
\frac{d }{d t} \int_{\Omega (t) } f ( x , t ) { \ } d \mathcal{H}^2_x = \int_{\Omega (t)} \{ D_t^w \rho + ({\rm{div}}_\Gamma w) \rho \} (x,t) { \ }d \mathcal{H}^2_x . 
\end{equation*}
Here $D_t^w = \partial_t f + (w \cdot \nabla ) f $.
\end{proposition}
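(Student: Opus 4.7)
The plan is to reduce everything to a fixed parameter domain by means of the parametrization $\widehat{x}:\overline{U}\times[0,T)\to\mathbb{R}^3$ and then differentiate under the integral sign. Since $\Omega(t)\subset\Gamma(t)$ is understood to move with the motion velocity $w$, we may write $\Omega(t)=\widehat{x}(\widehat{\Omega},t)$ for some fixed $\widehat{\Omega}\subset U$. Using the area formula one has $d\mathcal{H}^2_x=\sqrt{\mathcal{G}(X,t)}\,dX$, so
\begin{equation*}
\int_{\Omega(t)} f(x,t)\,d\mathcal{H}^2_x=\int_{\widehat{\Omega}} f(\widehat{x}(X,t),t)\sqrt{\mathcal{G}(X,t)}\,dX.
\end{equation*}
Because $\widehat{\Omega}$ is independent of $t$ and the integrand is $C^1$ in $t$, we may pass $d/dt$ inside.

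Next I would handle the two resulting pieces separately. The chain rule applied to $f(\widehat{x}(X,t),t)$ together with the identity $\partial_t\widehat{x}=w(\widehat{x}(X,t),t)$ gives
\begin{equation*}
\frac{d}{dt}f(\widehat{x}(X,t),t)=\partial_t f+(w\cdot\nabla)f=D_t^w f,
\end{equation*}
which accounts for the first term on the right-hand side of the desired formula. For the metric factor I would compute $\partial_t\mathfrak{g}_\alpha=\partial_\alpha w$ (as a function of $X$), whence $\partial_t\mathfrak{g}_{\alpha\beta}=\partial_\alpha w\cdot\mathfrak{g}_\beta+\mathfrak{g}_\alpha\cdot\partial_\beta w$, and then apply Jacobi's formula to obtain $\partial_t\mathcal{G}=\mathcal{G}\,\mathfrak{g}^{\alpha\beta}\partial_t\mathfrak{g}_{\alpha\beta}=2\mathcal{G}(\mathfrak{g}^\alpha\cdot\partial_\alpha w)$. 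Hence
\begin{equation*}
\partial_t\sqrt{\mathcal{G}}=\sqrt{\mathcal{G}}\,(\mathfrak{g}^\alpha\cdot\partial_\alpha w).
\end{equation*}

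The main obstacle is to identify the local-coordinate expression $\mathfrak{g}^\alpha\cdot\partial_\alpha w$ with the intrinsic quantity $\mathrm{div}_\Gamma w$. This is the step where the geometry actually enters. I would argue as follows: at each point of $\Gamma(t)$ the vectors $\{\mathfrak{g}^1,\mathfrak{g}^2,n\}$ form a basis of $\mathbb{R}^3$ dual to $\{\mathfrak{g}_1,\mathfrak{g}_2,n\}$, so for any vector field $w$,
\begin{equation*}
\sum_{i=1}^3(\delta_{ij}-n_in_j)\partial_i w_j\big|_{x=\widehat{x}(X,t)}=\mathfrak{g}^{\alpha\beta}\mathfrak{g}_\alpha\cdot\partial_\beta w=\mathfrak{g}^\alpha\cdot\partial_\alpha w,
\end{equation*}
because the tangential projector $I-n\otimes n$ equals $\mathfrak{g}^{\alpha\beta}\mathfrak{g}_\alpha\otimes\mathfrak{g}_\beta$ and $\partial_\beta w=(\partial_\beta\widehat{x}_i)\partial_i w=\mathfrak{g}_\beta\cdot\nabla w$ componentwise. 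Once this is in hand, combining the two contributions and pushing the integral back to $\Omega(t)$ via $d\mathcal{H}^2_x=\sqrt{\mathcal{G}}\,dX$ yields
\begin{equation*}
\frac{d}{dt}\int_{\Omega(t)} f\,d\mathcal{H}^2_x=\int_{\Omega(t)}\bigl\{D_t^w f+(\mathrm{div}_\Gamma w)f\bigr\}\,d\mathcal{H}^2_x,
\end{equation*}
which is the claimed formula. The regularity $f\in C^1(\Gamma_T)$ and $\widehat{x}\in[C^2(\overline{U}\times[0,T))]^3$ ensures all differentiations and the interchange of limit and integral are legitimate.
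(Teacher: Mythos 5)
Your argument is correct, and it is essentially the standard proof of the surface transport theorem. Note, though, that the paper does not prove Proposition \ref{prop26} at all: it simply refers the reader to Betounes, Gurtin--Struthers--Williams, Dziuk--Elliott, and Koba et al., so your proposal supplies a self-contained argument where the paper defers to the literature. Your key pointwise identity $\partial_t\sqrt{\mathcal{G}}=\sqrt{\mathcal{G}}\,(\mathfrak{g}^\alpha\cdot\partial_\alpha w)=\sqrt{\mathcal{G}}\,\operatorname{div}_\Gamma w$ is exactly the pointwise form of the paper's Lemma \ref{lem31}$(\mathrm{ii})$ (whose proof is likewise only cited from \cite{K18}, \cite{K19}), and the same combination of the pull-back $d\mathcal{H}^2_x=\sqrt{\mathcal{G}}\,dX$, the chain-rule identity $\tfrac{d}{dt}f(\widehat{x}(X,t),t)=D_t^w f$, and Lemma \ref{lem31}$(\mathrm{ii})$ is how the paper itself argues in the proof of Proposition \ref{prop52}; so your route is consistent with the paper's framework rather than divergent from it. Two small points you handled correctly and should keep explicit: the hypothesis ``$\Omega(t)\subset\Gamma(t)$'' must be read as a material region, i.e.\ $\Omega(t)=\widehat{x}(\widehat{\Omega},t)$ with $\widehat{\Omega}$ fixed (otherwise the statement is false), and the $\rho$ appearing on the right-hand side of the proposition as printed is a typo for $f$, which your version silently and rightly corrects. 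Your identification $\sum_{i,j}(\delta_{ij}-n_in_j)\partial_iw_j=\mathfrak{g}^{\alpha\beta}\,\mathfrak{g}_\beta\cdot\partial_\alpha w$ via the projector $I-n\otimes n=\mathfrak{g}^{\alpha\beta}\mathfrak{g}_\alpha\otimes\mathfrak{g}_\beta$ is sound, since the tangential directional derivatives $(\mathfrak{g}_\alpha\cdot\nabla)w$ are independent of the ambient extension of $w$.
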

\noindent The proof of Proposition \ref{prop26} can be founded in Betounes \cite{Bet86}, Gurtin-Struthers-Williams \cite{GSW89}, Dziuk-Elliott \cite{DE07}, Koba-Liu-Giga \cite{KLG17}, and Koba (\cite{K18}, \cite{K19}). In Section \ref{sect5} we study the transport theorem on an evolving double bubble.

Applying an energetic variational approach with Theorem \ref{thm25} and Proposition \ref{prop26}, we derive the two systems \eqref{eq11} and \eqref{eq12}. See Section \ref{sect4} for details. Finally, we state the boundary conditions for the two systems.
\begin{theorem}[Conservation and energy laws of \eqref{eq11} and \eqref{eq12}]\label{thm27}{ \ }\\
$(\mathrm{i})$ Assume that 
\begin{equation*}
\frac{\partial C}{\partial \nu} \bigg|_{\partial \Gamma (t)} = 0.
\end{equation*}
Then the system \eqref{eq11} satisfies \eqref{eq13} and \eqref{eq14}.\\
$(\mathrm{ii})$ Assume that $C|_{\partial \Gamma (t)} = 0$. Then the system \eqref{eq11} satisfies \eqref{eq14}.\\
$(\mathrm{iii})$ Assume that 
\begin{equation*}
\frac{\partial \theta}{\partial \nu} \bigg|_{\partial \Gamma (t)} = 0.
\end{equation*}
Then the system \eqref{eq12} satisfies that for $t_1 < t_2$,
\begin{equation*}
\int_{\Gamma (t_2)} \rho (x,t_2) \theta (x , t_2) { \ } d \mathcal{H}^2_x = \int_{\Gamma (t_1)} \rho (x,t_1) \theta (x, t_1) { \ }d \mathcal{H}^2_x,
\end{equation*}
and
\begin{multline}\label{eq23}
 \int_{\Gamma (t_2)} \frac{1}{2} \rho (x,t_2) | \theta (x , t_2)|^2 { \ } d \mathcal{H}^2_x + \int_{t_1}^{t_2} \int_{\Gamma (\tau )} e_2' (| \nabla_\Gamma \theta |^2 ) | \nabla_\Gamma \theta |^2 { \ }d \mathcal{H}^2_x d \tau\\
 = \int_{\Gamma (t_1)} \frac{1}{2} \rho (x,t_1) | \theta (x, t_1)|^2 { \ }d \mathcal{H}^2_x.
\end{multline}
$(\mathrm{iv})$ Assume that $\theta |_{\partial \Gamma (t)} = 0$. Then the system \eqref{eq12} satisfies \eqref{eq23}.
\end{theorem}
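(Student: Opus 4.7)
The plan is to treat all four assertions by a common template: multiply the governing equation by a scalar test factor ($1$ or $C$ for (\ref{eq11}); $1$ or $\theta$ for the second line of (\ref{eq12})), integrate over $\Gamma(t)$, use Proposition~\ref{prop26} to convert the material-derivative side into $\tfrac{d}{dt}$ of an integral, and use Theorem~\ref{thm23} to move the tangential divergence onto the test factor. The geometric observation that drives every boundary calculation is the tangentiality identity $n\cdot\nabla_\Gamma f = 0$, immediate from the projection formula $\partial_j^\Gamma f=\sum_{i=1}^{3}(\delta_{ij}-n_j n_i)\partial_i f$. Consequently, whenever the flux $\varphi$ has the form $(\text{scalar})\cdot\nabla_\Gamma(\text{scalar})$, the curvature contribution $-\int_{\Gamma(t)}H_\Gamma(n\cdot\varphi)\,d\mathcal{H}^2_x$ in (\ref{eq21}) vanishes identically, leaving only the co-normal boundary integral.

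For (i), I integrate (\ref{eq11}) over $\Gamma(t)$: Proposition~\ref{prop26} converts the left-hand side into $\tfrac{d}{dt}\int_{\Gamma(t)}C\,d\mathcal{H}^2_x$, while Theorem~\ref{thm23} applied with $\varphi = e_1'(|\nabla_\Gamma C|^2)\,\nabla_\Gamma C$ reduces the right-hand side to $\int_{\partial\Gamma(t)}e_1'(|\nabla_\Gamma C|^2)(\partial C/\partial\nu)\,d\mathcal{H}^1_x$, which vanishes by hypothesis; integration from $t_1$ to $t_2$ then yields (\ref{eq13}). For the energy identity in both (i) and (ii), I multiply (\ref{eq11}) by $C$, use $CD_t^wC=\tfrac{1}{2}D_t^w(C^2)$ to pass to a pointwise equation for $C^2/2$, apply Proposition~\ref{prop26}, and then apply Theorem~\ref{thm23} to $\varphi = Ce_1'(|\nabla_\Gamma C|^2)\,\nabla_\Gamma C$. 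The product rule $C\,\mathrm{div}_\Gamma(e_1'\nabla_\Gamma C) = \mathrm{div}_\Gamma(Ce_1'\nabla_\Gamma C)-e_1'(|\nabla_\Gamma C|^2)|\nabla_\Gamma C|^2$ isolates the dissipation, and the residual boundary term $\int_{\partial\Gamma(t)}Ce_1'(\partial C/\partial\nu)\,d\mathcal{H}^1_x$ vanishes in either case, giving (\ref{eq14}) after integration in time.

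For (iii) and (iv), I exploit the first equation of (\ref{eq12}), namely $D_t^w\rho+(\mathrm{div}_\Gamma w)\rho=0$, to cancel the $(\mathrm{div}_\Gamma w)$-contributions that arise via Leibniz. Explicitly, $D_t^w(\rho\theta)+(\mathrm{div}_\Gamma w)(\rho\theta) = \theta[D_t^w\rho+(\mathrm{div}_\Gamma w)\rho]+\rho D_t^w\theta = \mathrm{div}_\Gamma\{e_2'(|\nabla_\Gamma\theta|^2)\nabla_\Gamma\theta\}$; integrating via Proposition~\ref{prop26} and Theorem~\ref{thm23} (the curvature term again vanishes) and using hypothesis (iii) produces $\tfrac{d}{dt}\int\rho\theta\,d\mathcal{H}^2_x=0$. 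Analogously, $D_t^w(\rho\theta^2/2)+(\mathrm{div}_\Gamma w)(\rho\theta^2/2) = \tfrac{\theta^2}{2}[D_t^w\rho+(\mathrm{div}_\Gamma w)\rho]+\rho\theta D_t^w\theta = \theta\,\mathrm{div}_\Gamma\{e_2'(|\nabla_\Gamma\theta|^2)\nabla_\Gamma\theta\}$; applying Theorem~\ref{thm23} to $\varphi=\theta\,e_2'(|\nabla_\Gamma\theta|^2)\nabla_\Gamma\theta$ separates out the dissipation and leaves a boundary term that vanishes under either (iii) or (iv), producing (\ref{eq23}).

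The argument is essentially a calculus exercise on evolving surfaces, and no new estimates or regularity hypotheses are needed beyond those already assumed. I expect the only real care required to be bookkeeping of the various $(\mathrm{div}_\Gamma w)$- and Leibniz-product terms so that they combine with the transport identity as intended, and confirmation at each step that $n\cdot\varphi$ does vanish pointwise for the chosen flux $\varphi$, so that the curvature contribution in (\ref{eq21}) drops out and the analysis localizes on the co-normal boundary data $\partial C/\partial\nu$ or $\partial\theta/\partial\nu$ where the hypotheses can be invoked.
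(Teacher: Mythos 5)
Your strategy is exactly the one the paper follows in subsections \ref{subsec42}--\ref{subsec43}: multiply by $1$, $C$, or $\theta$, integrate over $\Gamma(t)$, convert the material-derivative side with Proposition \ref{prop26}, convert the divergence term with Theorem \ref{thm23}, discard the curvature term because $n\cdot\nabla_\Gamma f=0$, and kill the co-normal boundary integral with the stated boundary condition. Your proof of the conservation law \eqref{eq13} in (i) and of parts (iii)--(iv) is correct and coincides with the paper's computation; in particular the identities $D_t^w(\rho\theta)+(\mathrm{div}_\Gamma w)\rho\theta=\rho D_t^w\theta$ and $D_t^w(\rho\theta^2/2)+(\mathrm{div}_\Gamma w)\rho\theta^2/2=\rho\theta D_t^w\theta$ are right, and it is precisely the continuity equation in \eqref{eq12} that makes them clean.

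The gap is in the energy law \eqref{eq14} for (i)--(ii). Multiplying \eqref{eq11} by $C$ gives the pointwise identity $D_t^w(C^2/2)+(\mathrm{div}_\Gamma w)\,C^2=C\,\mathrm{div}_\Gamma\{e_1'(|\nabla_\Gamma C|^2)\nabla_\Gamma C\}$, whereas Proposition \ref{prop26} applied to $f=C^2/2$ needs $D_t^w(C^2/2)+(\mathrm{div}_\Gamma w)\,C^2/2$; the mismatch does not cancel, and what the argument actually yields is
\begin{multline*}
\frac{d}{dt}\int_{\Gamma(t)}\frac{1}{2}|C|^2\,d\mathcal{H}^2_x+\int_{\Gamma(t)}e_1'(|\nabla_\Gamma C|^2)|\nabla_\Gamma C|^2\,d\mathcal{H}^2_x\\
=\int_{\partial\Gamma(t)}e_1'(|\nabla_\Gamma C|^2)\,C\,\frac{\partial C}{\partial\nu}\,d\mathcal{H}^1_x-\frac{1}{2}\int_{\Gamma(t)}(\mathrm{div}_\Gamma w)\,|C|^2\,d\mathcal{H}^2_x,
\end{multline*}
so the step you describe as ``bookkeeping of the $(\mathrm{div}_\Gamma w)$-terms so that they combine with the transport identity as intended'' is exactly where the proof cannot be closed: unlike in (iii)--(iv), system \eqref{eq11} has no continuity equation to absorb $\mathrm{div}_\Gamma w$, and the residual volume term is not removed by either $\partial C/\partial\nu|_{\partial\Gamma(t)}=0$ or $C|_{\partial\Gamma(t)}=0$. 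Concretely, for the uniformly expanding flat disk $\widehat{x}(X,t)={}^t(a(t)X_1,a(t)X_2,0)$ over the unit disk $U$, with $e_1(r)=r$, the spatially constant solution $C(x,t)=c_0\,a(0)^2/a(t)^2$ of \eqref{eq11} satisfies $\partial C/\partial\nu=0$ and \eqref{eq13}, yet $\int_{\Gamma(t)}\frac12|C|^2\,d\mathcal{H}^2_x$ strictly decreases while the dissipation vanishes, so \eqref{eq14} fails. The paper's own two-line derivation in subsection \ref{subsec42} asserts the same identity and silently drops the same term, so you have faithfully reproduced its route; but to make this part rigorous one must either carry the extra term $-\frac12\int_{t_1}^{t_2}\int_{\Gamma(\tau)}(\mathrm{div}_\Gamma w)|C|^2\,d\mathcal{H}^2_x\,d\tau$ in \eqref{eq14} or impose an additional hypothesis such as $\mathrm{div}_\Gamma w=0$.
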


In Section \ref{sect3} we prove Theorems \ref{thm23} and \ref{thm24}. In Section \ref{sect4} we derive the two system \eqref{eq11} and \eqref{eq12} from an energetic point of view. In subsection \ref{subsec41} we give the proof of Theorem \ref{thm25}. In subsection \ref{subsec42} we show the assertions $(\mathrm{i})$ and $(\mathrm{ii})$ of Theorem \ref{thm27}. In subsection \ref{subsec43} we prove the assertions $(\mathrm{iii})$ and $(\mathrm{iv})$ of Theorem \ref{thm27}. In Section \ref{sect5} we study the divergence and transport theorems on an evolving double bubble. In Section \ref{sect6} we make a mathematical model for a diffusion process on an evolving double bubble, and study the conservation and energy laws of the system \eqref{eq17}.

\section{Surface Divergence Theorem}\label{sect3}

This section is devoted to deriving the surface divergence theorem and the representation formula for the unit outer co-normal vector to the boundary of the surface $\Gamma (t)$. To prove Theorems \ref{thm23} and \ref{thm24}, we prepare the following three lemmas.

\begin{lemma}[Representation formulas for surface divergence]\label{lem31}{ \ }\\
$(\mathrm{i})$ For every $\varphi = { }^t ( \varphi_1 , \varphi_2 , \varphi_3 ) \in C^1 ( \overline{ \Gamma (t)})$,
\begin{equation}\label{eq31}
\int_{\Gamma (t)} {\rm{div}}_\Gamma \varphi { \ }d \mathcal{H}^2_x = \int_{U} \mathfrak{g}^{\alpha} \cdot \frac{\partial \varphi}{\partial X_\alpha}  \sqrt{ \mathcal{G} } { \ } d X.
\end{equation}
$(\mathrm{ii})$ For each $0 < t < T$,
\begin{equation}\label{eq32}
\int_{\Gamma (t)} {\rm{div}}_\Gamma w{ \ }d \mathcal{H}^2_x = \int_{U} \frac{\partial }{\partial t }\sqrt{ \mathcal{G} } { \ } d X.
\end{equation}
Here $\mathcal{G} = | \mathfrak{g}_1 \times \mathfrak{g}_2|^2 = \mathfrak{g}_{11} \mathfrak{g}_{22} - \mathfrak{g}_{12} \mathfrak{g}_{21}$.
\end{lemma}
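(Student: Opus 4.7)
My strategy is to reduce both identities to a single parametric expression for the surface gradient and then apply the area-element change of variables $d\mathcal{H}^2_x = \sqrt{\mathcal{G}}\, dX$.

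The central step is to establish the identity
\[
(\nabla_\Gamma f)\bigl(\widehat{x}(X,t),t\bigr) \;=\; \mathfrak{g}^{\alpha}(X,t)\,\frac{\partial (f\circ \widehat{x})}{\partial X_\alpha}(X,t), \qquad f\in C^1(\overline{\Gamma(t)}).
\]
To prove it I would fix any $C^1$-extension $\mathcal{F}$ of $f$ to an $\mathbb{R}^3$-neighborhood of $\Gamma(t)$, so that by definition the $j$-th component of $\nabla_\Gamma f$ equals $\sum_i(\delta_{ij}-n_in_j)\partial_i \mathcal{F}$; use $\mathfrak{g}^\alpha\cdot n=0$ (which holds because $\mathfrak{g}^\alpha$ lies in the tangent plane $\mathrm{span}\{\mathfrak{g}_1,\mathfrak{g}_2\}$) together with the chain rule $\partial_{X_\alpha}(f\circ\widehat{x}) = \mathfrak{g}_\alpha\cdot\nabla\mathcal{F}$; and verify the componentwise identity $\delta_{ij}-n_in_j = \mathfrak{g}^\alpha_i\,\mathfrak{g}_{\alpha,j}$ on $\Gamma(t)$, which is exactly the biorthogonality of the dual bases $\{n,\mathfrak{g}^1,\mathfrak{g}^2\}$ and $\{n,\mathfrak{g}_1,\mathfrak{g}_2\}$ of $\mathbb{R}^3$. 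Taking the $j$-th component with $f = \varphi_j$ and summing over $j$ then gives
\[
{\rm{div}}_\Gamma \varphi\bigl(\widehat{x}(X,t),t\bigr) \;=\; \mathfrak{g}^\alpha(X,t)\cdot\frac{\partial (\varphi\circ\widehat{x})}{\partial X_\alpha}(X,t),
\]
and multiplying by $\sqrt{\mathcal{G}}$ and integrating over $U$ yields \eqref{eq31}.

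For part (ii) I would simply apply (i) with $\varphi = w$. Since $w(\widehat{x}(X,t),t) = \partial_t\widehat{x}(X,t)$ and $\widehat{x}\in C^2$, interchange of partial derivatives yields $\partial_{X_\alpha}(w\circ\widehat{x}) = \partial_t\mathfrak{g}_\alpha$. It therefore remains to show the pointwise identity
\[
\sqrt{\mathcal{G}}\;\mathfrak{g}^{\alpha}\cdot\frac{\partial \mathfrak{g}_\alpha}{\partial t}\;=\;\frac{\partial \sqrt{\mathcal{G}}}{\partial t}.
\]
This follows from Jacobi's formula applied to the $2\times 2$ matrix $(\mathfrak{g}_{\alpha\beta})$, namely $\partial_t\mathcal{G} = \mathcal{G}\,\mathfrak{g}^{\alpha\beta}\partial_t\mathfrak{g}_{\alpha\beta}$; expanding $\partial_t\mathfrak{g}_{\alpha\beta} = (\partial_t\mathfrak{g}_\alpha)\cdot\mathfrak{g}_\beta + \mathfrak{g}_\alpha\cdot(\partial_t\mathfrak{g}_\beta)$ and using the symmetry of $(\mathfrak{g}^{\alpha\beta})$ together with the contraction $\mathfrak{g}^{\alpha\beta}\mathfrak{g}_\beta = \mathfrak{g}^\alpha$ produces $\partial_t\mathcal{G} = 2\mathcal{G}\,\mathfrak{g}^\alpha\cdot\partial_t\mathfrak{g}_\alpha$, and dividing by $2\sqrt{\mathcal{G}}$ gives the claim.

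The main obstacle is the parametric gradient identity used at the start of (i). One must both verify the biorthogonality identity $\delta_{ij}-n_in_j = \mathfrak{g}^\alpha_i\,\mathfrak{g}_{\alpha,j}$ by a careful component computation (e.g.\ by comparing the action of both sides on the basis $\{n,\mathfrak{g}_1,\mathfrak{g}_2\}$) and argue that the resulting representation of $\nabla_\Gamma f$ is independent of the chosen ambient extension $\mathcal{F}$, so that ${\rm{div}}_\Gamma$ is intrinsically defined on $\Gamma(t)$. Once that identity is in hand, both (i) and (ii) reduce to short, direct calculations.
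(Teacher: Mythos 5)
Your proof is correct, and it is the expected argument: the paper itself does not prove Lemma \ref{lem31} but defers to \cite{K18} and \cite{K19}, where the same standard parametric computation is carried out. Your three ingredients — the biorthogonality identity $\delta_{ij}-n_in_j=\mathfrak{g}^\alpha_i\mathfrak{g}_{\alpha ,j}$ (equivalently $I-n\otimes n=\mathfrak{g}^{\alpha\beta}\mathfrak{g}_\alpha\otimes\mathfrak{g}_\beta$, which also shows ${\rm{div}}_\Gamma$ is independent of the ambient extension), the area element $d\mathcal{H}^2_x=\sqrt{\mathcal{G}}\,dX$, and Jacobi's formula giving $\partial_t\mathcal{G}=2\mathcal{G}\,\mathfrak{g}^\alpha\cdot\partial_t\mathfrak{g}_\alpha$ after exchanging $\partial_{X_\alpha}$ and $\partial_t$ (legitimate since $\widehat{x}\in C^2$) — are exactly what the identities \eqref{eq31} and \eqref{eq32} rest on, so there is no essential difference in route.
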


\begin{lemma}\label{lem32}
For every $0 \leq t < T$ and $X \in \overline{U}$, we set
\begin{equation*}
\mathfrak{g}_3 = \mathfrak{g}_3 ( X , t) : = \frac{ \mathfrak{g}_1 \times \mathfrak{g}_2 }{ | \mathfrak{g}_1 \times \mathfrak{g}_2| }.
\end{equation*}
Then the following three assertions hold:\\
$(\mathrm{i})$ For every $0 \leq t< T$ and $X \in \overline{U}$, 
\begin{align}
\mathfrak{g}^1 \sqrt{\mathfrak{g}_{11} \mathfrak{g}_{22} - \mathfrak{g}_{12} \mathfrak{g}_{21} } & = \mathfrak{g}_2 \times \mathfrak{g}_3,\label{eq33}\\
\mathfrak{g}^2 \sqrt{\mathfrak{g}_{11} \mathfrak{g}_{22} - \mathfrak{g}_{12} \mathfrak{g}_{21} } & = - \mathfrak{g}_1 \times \mathfrak{g}_3.\label{eq34}
\end{align}
$(\mathrm{ii})$ $($\text{Weingarten formula}$)$: For all $0 \leq t< T$ and $X \in \overline{U}$,
\begin{align}
\frac{\partial \mathfrak{g}_3}{\partial X_1} = c_1 \mathfrak{g}_1 + c_2 \mathfrak{g}_2,\label{eq35}\\
\frac{\partial \mathfrak{g}_3}{\partial X_2} = c_3 \mathfrak{g}_1 + c_4 \mathfrak{g}_2,\label{eq36}
\end{align}
where
\begin{align*}
c_1 := \frac{ \mathfrak{g}_{12} \left( \frac{\partial \mathfrak{g}_1}{\partial X_2} \cdot \mathfrak{g}_3 \right) - \mathfrak{g}_{22} \left( \frac{\partial \mathfrak{g}_1}{\partial X_1} \cdot \mathfrak{g}_3 \right) }{ \mathfrak{g}_{11} \mathfrak{g}_{22} - \mathfrak{g}_{12} \mathfrak{g}_{21} }, c_2 := \frac{ \mathfrak{g}_{12} \left( \frac{\partial \mathfrak{g}_1}{\partial X_1} \cdot \mathfrak{g}_3 \right) - \mathfrak{g}_{11} \left( \frac{\partial \mathfrak{g}_1}{\partial X_2} \cdot \mathfrak{g}_3 \right) }{ \mathfrak{g}_{11} \mathfrak{g}_{22} - \mathfrak{g}_{12} \mathfrak{g}_{21} },\\
c_3 := \frac{ \mathfrak{g}_{12} \left( \frac{\partial \mathfrak{g}_2}{\partial X_2} \cdot \mathfrak{g}_3 \right) - \mathfrak{g}_{22} \left( \frac{\partial \mathfrak{g}_1}{\partial X_2} \cdot \mathfrak{g}_3 \right) }{ \mathfrak{g}_{11} \mathfrak{g}_{22} - \mathfrak{g}_{12} \mathfrak{g}_{21} }, 
c_4 := \frac{ \mathfrak{g}_{12} \left( \frac{\partial \mathfrak{g}_1}{\partial X_2} \cdot \mathfrak{g}_3 \right) - \mathfrak{g}_{11} \left( \frac{\partial \mathfrak{g}_2}{\partial X_2} \cdot \mathfrak{g}_3 \right) }{ \mathfrak{g}_{11} \mathfrak{g}_{22} - \mathfrak{g}_{12} \mathfrak{g}_{21} }.
\end{align*}
$(\mathrm{iii})$ 
Fix $0 \leq t < T$. Then for every $\varphi = { }^t ( \varphi_1 , \varphi_2 , \varphi_3 ) \in [ C^1 ( \overline{\Gamma (t)} ) ]^3$,
\begin{equation}\label{eq37}
- \int_{U} \left\{   \mathfrak{g}_2 \times  \frac{\partial \mathfrak{g}_3}{\partial X_1} - \mathfrak{g}_1 \times  \frac{\partial \mathfrak{g}_3}{\partial X_2} \right\} \cdot \varphi { \ } d X = - \int_{\Gamma (t)} H_\Gamma (n \cdot \varphi ) { \ }d \mathcal{H}^2_x.
\end{equation}
\end{lemma}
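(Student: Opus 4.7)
\smallskip

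\noindent\textbf{Proof plan for Lemma \ref{lem32}.}

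For assertion $(\mathrm{i})$, my approach is a direct algebraic computation. By definition, $\mathfrak{g}^1 = \mathfrak{g}^{1\beta}\mathfrak{g}_\beta = \mathcal{G}^{-1}(\mathfrak{g}_{22}\mathfrak{g}_1 - \mathfrak{g}_{21}\mathfrak{g}_2)$. On the other hand, since $\mathfrak{g}_3 = (\mathfrak{g}_1\times \mathfrak{g}_2)/\sqrt{\mathcal{G}}$, the vector triple product identity $a\times(b\times c) = (a\cdot c)b - (a\cdot b)c$ gives
\begin{equation*}
\mathfrak{g}_2\times \mathfrak{g}_3 = \frac{1}{\sqrt{\mathcal{G}}}\bigl(\mathfrak{g}_{22}\mathfrak{g}_1 - \mathfrak{g}_{21}\mathfrak{g}_2\bigr),
\end{equation*}
so multiplying $\mathfrak{g}^1$ by $\sqrt{\mathcal{G}}$ yields \eqref{eq33}. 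The identity \eqref{eq34} is obtained identically, now with $-\mathfrak{g}_1\times \mathfrak{g}_3 = \mathcal{G}^{-1/2}(-\mathfrak{g}_{12}\mathfrak{g}_1 + \mathfrak{g}_{11}\mathfrak{g}_2)$, matching $\sqrt{\mathcal{G}}\,\mathfrak{g}^2 = \sqrt{\mathcal{G}}(\mathfrak{g}^{21}\mathfrak{g}_1 + \mathfrak{g}^{22}\mathfrak{g}_2)$.

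For assertion $(\mathrm{ii})$, I would observe that $\mathfrak{g}_3$ has unit length and is orthogonal to both $\mathfrak{g}_1$ and $\mathfrak{g}_2$, hence $\partial \mathfrak{g}_3/\partial X_\alpha$ lies in the tangent plane and admits a unique expansion in $\{\mathfrak{g}_1,\mathfrak{g}_2\}$. Writing $\partial \mathfrak{g}_3/\partial X_1 = c_1\mathfrak{g}_1 + c_2\mathfrak{g}_2$ and taking dot products with $\mathfrak{g}_1$ and $\mathfrak{g}_2$ produces the $2\times 2$ system
\begin{equation*}
\begin{pmatrix}\mathfrak{g}_{11} & \mathfrak{g}_{21}\\ \mathfrak{g}_{12} & \mathfrak{g}_{22}\end{pmatrix}\begin{pmatrix}c_1\\ c_2\end{pmatrix} = -\begin{pmatrix}\mathfrak{g}_3\cdot\partial_{X_1}\mathfrak{g}_1\\ \mathfrak{g}_3\cdot\partial_{X_1}\mathfrak{g}_2\end{pmatrix},
\end{equation*}
where I differentiate $\mathfrak{g}_3\cdot \mathfrak{g}_\beta = 0$ to replace the left-hand-side dot products by $-\mathfrak{g}_3\cdot \partial_{X_1}\mathfrak{g}_\beta$, and use the symmetry $\partial_{X_1}\mathfrak{g}_2 = \partial_{X_2}\mathfrak{g}_1$ (since both equal $\partial^2 \widehat{x}/\partial X_1\partial X_2$). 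Solving by Cramer's rule, together with $\mathfrak{g}_{12}=\mathfrak{g}_{21}$, delivers the stated formulas for $c_1,c_2$, and the same procedure applied to $\partial \mathfrak{g}_3/\partial X_2$ yields $c_3,c_4$.

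For assertion $(\mathrm{iii})$, the key is to reduce the integrand on the left-hand side to a scalar multiple of $\mathfrak{g}_3$ and then identify the scalar with $H_\Gamma\sqrt{\mathcal{G}}$. Substituting the Weingarten formulas \eqref{eq35}--\eqref{eq36} and using $\mathfrak{g}_\alpha\times \mathfrak{g}_\alpha = 0$ together with $\mathfrak{g}_1\times \mathfrak{g}_2 = \sqrt{\mathcal{G}}\,\mathfrak{g}_3$, I find
\begin{equation*}
\mathfrak{g}_2\times \frac{\partial \mathfrak{g}_3}{\partial X_1} - \mathfrak{g}_1\times \frac{\partial \mathfrak{g}_3}{\partial X_2} = -(c_1+c_4)\sqrt{\mathcal{G}}\,\mathfrak{g}_3.
\end{equation*}
It then remains to show $c_1 + c_4 = -H_\Gamma$. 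For this I would apply \eqref{eq31} from Lemma \ref{lem31} componentwise (or argue at the pointwise level, since the identity in Lemma \ref{lem31}$(\mathrm{i})$ is the integrated form of $\mathrm{div}_\Gamma \varphi\circ \widehat{x} = \mathfrak{g}^\alpha\cdot \partial_{X_\alpha}\varphi$) to the vector field $n$, giving $\mathrm{div}_\Gamma n\circ \widehat{x} = \mathfrak{g}^\alpha\cdot \partial_{X_\alpha}\mathfrak{g}_3$. Pairing \eqref{eq35}--\eqref{eq36} with the duality $\mathfrak{g}^\alpha\cdot \mathfrak{g}_\beta = \delta_{\alpha\beta}$ makes this sum collapse to $c_1+c_4$, so $H_\Gamma = -\mathrm{div}_\Gamma n = -(c_1+c_4)$. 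Finally, recognising $(n\cdot \varphi)\,d\mathcal{H}^2_x = (\mathfrak{g}_3\cdot \varphi)\sqrt{\mathcal{G}}\,dX$ under the pullback converts the $U$-integral into the $\Gamma(t)$-integral in \eqref{eq37}.

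The main obstacle is the last identification $c_1+c_4 = -H_\Gamma$: the rest is cross-product bookkeeping, but linking the trace of the Weingarten map to $\mathrm{div}_\Gamma n$ requires the parametric representation of $\mathrm{div}_\Gamma$ supplied by Lemma \ref{lem31}$(\mathrm{i})$ and the duality $\mathfrak{g}^\alpha\cdot \mathfrak{g}_\beta=\delta_{\alpha\beta}$, so the logical order is to complete $(\mathrm{i})$ and $(\mathrm{ii})$ first and use them both inside $(\mathrm{iii})$.
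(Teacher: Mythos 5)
Your proposal is correct and follows essentially the same route as the paper: part $(\mathrm{i})$ by comparing $\sqrt{\mathcal{G}}\,\mathfrak{g}^\alpha$ with the cross products (you via the triple-product identity, the paper by componentwise expansion), part $(\mathrm{ii})$ by differentiating the orthonormality relations and solving the same $2\times 2$ Gram system, and part $(\mathrm{iii})$ by using Weingarten to reduce both sides of \eqref{eq37} to $\int_U (c_1+c_4)\sqrt{\mathcal{G}}\,(\mathfrak{g}_3\cdot\varphi)\,dX$. Your identification $\mathrm{div}_\Gamma n = \mathfrak{g}^\alpha\cdot\partial_{X_\alpha}\mathfrak{g}_3 = c_1+c_4$ via the duality $\mathfrak{g}^\alpha\cdot\mathfrak{g}_\beta=\delta_{\alpha\beta}$ is a slightly shorter version of the paper's cross-product computation, and the pointwise form of Lemma \ref{lem31}$(\mathrm{i})$ that you invoke is exactly what the paper also uses implicitly when it inserts the weight $n\cdot\varphi$.
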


\begin{lemma}\label{lem33}
Fix $0 \leq t < T$. Then for every $f \in C( \overline{\Gamma (t)})$,
\begin{equation}\label{eq38}
\int_{\partial \Gamma (t)} f { \ }d \mathcal{H}^1_x = \int_{\partial U} f | n_1^U \mathfrak{g}_2 - n_2^U \mathfrak{g}_1 | { \ }d \mathcal{H}^1_X.
\end{equation}
Here $n^U = (n_1^U , n_2^U )$ is the unit outer normal vector to $\partial U$.
\end{lemma}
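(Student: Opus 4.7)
The plan is to reduce \eqref{eq38} to a change-of-variables calculation on each Lipschitz piece $\partial U^m$ using the parametrization from Definition \ref{def21}, then sum over $m = 1, \dots, M$. Since $\partial \Gamma(t) = \bigcup_{m=1}^M \partial \Gamma^m(t)$ with $\partial \Gamma^m(t) = \widehat{x}(\partial U^m, t)$, and since the corners form an $\mathcal{H}^1$-null set, it suffices to prove
\begin{equation*}
\int_{\partial \Gamma^m (t)} f \, d \mathcal{H}^1_x = \int_{\partial U^m} f\,|n_1^U \mathfrak{g}_2 - n_2^U \mathfrak{g}_1| \, d \mathcal{H}^1_X
\end{equation*}
for each $m$, and then add the identities.

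Fix $m$ and use the Lipschitz parametrization $\ell \mapsto (p_m(\ell), q_m(\ell))$ of $\partial U^m$. By the chain rule, the curve $\ell \mapsto \widehat{x}(p_m(\ell), q_m(\ell), t)$ has tangent vector $p_m'(\ell)\,\mathfrak{g}_1 + q_m'(\ell)\,\mathfrak{g}_2$ almost everywhere, so by the area formula
\begin{equation*}
\int_{\partial \Gamma^m(t)} f \, d\mathcal{H}^1_x = \int_{a_m}^{b_m} f(\widehat{x}(p_m(\ell),q_m(\ell),t))\,|p_m'(\ell)\,\mathfrak{g}_1 + q_m'(\ell)\,\mathfrak{g}_2|\,d\ell.
\end{equation*}
On the other hand, the unit outer normal to $\partial U^m$ is, at almost every $\ell$, the $90^\circ$ clockwise rotation of the unit tangent (with the convention that $\partial U$ is traversed counterclockwise), so
\begin{equation*}
n^U(p_m(\ell),q_m(\ell)) = \frac{1}{\sqrt{p_m'(\ell)^2 + q_m'(\ell)^2}}\,{}^t(q_m'(\ell),\,-p_m'(\ell)).
\end{equation*}
Plugging this into $n_1^U \mathfrak{g}_2 - n_2^U \mathfrak{g}_1$ yields
\begin{equation*}
|n_1^U \mathfrak{g}_2 - n_2^U \mathfrak{g}_1|
= \frac{|p_m'(\ell)\,\mathfrak{g}_1 + q_m'(\ell)\,\mathfrak{g}_2|}{\sqrt{p_m'(\ell)^2 + q_m'(\ell)^2}}.
\end{equation*}

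Since $d\mathcal{H}^1_X$ on $\partial U^m$ pulls back to $\sqrt{p_m'(\ell)^2 + q_m'(\ell)^2}\,d\ell$ by the area formula again, multiplying gives
\begin{equation*}
|n_1^U \mathfrak{g}_2 - n_2^U \mathfrak{g}_1|\,d\mathcal{H}^1_X = |p_m'(\ell)\,\mathfrak{g}_1 + q_m'(\ell)\,\mathfrak{g}_2|\,d\ell,
\end{equation*}
so the two integrals over $\partial U^m$ and $\partial \Gamma^m(t)$ coincide. Summation over $m$ yields \eqref{eq38}. I do not expect any serious obstacle here: the only subtle point is justifying the orientation convention for $n^U$ (which sign of the rotated tangent is the outer normal), and checking that the condition $(p_m')^2 + (q_m')^2 > 0$ a.e.\ from Definition \ref{def21}(iii), combined with condition (iv) of Definition \ref{def22} ensuring $\mathfrak{g}_1,\mathfrak{g}_2$ are linearly independent, guarantees that $p_m'\mathfrak{g}_1 + q_m'\mathfrak{g}_2 \neq 0$ a.e., so that the area formula applies on each piece.
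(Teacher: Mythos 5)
Your proposal is correct and follows essentially the same route as the paper: parametrize each Lipschitz piece $\partial U^m$ by $\ell \mapsto (p_m(\ell), q_m(\ell))$, note that the line element on $\partial\Gamma^m(t)$ is $|p_m'\,\mathfrak{g}_1 + q_m'\,\mathfrak{g}_2|\,d\ell$ (the paper writes this as $\sqrt{(p_m')^2\mathfrak{g}_{11} + 2p_m'q_m'\mathfrak{g}_{12} + (q_m')^2\mathfrak{g}_{22}}$), identify $n^U$ as the (sign-ambiguous) rotated unit tangent, and observe that $|n_1^U\mathfrak{g}_2 - n_2^U\mathfrak{g}_1|\sqrt{(p_m')^2+(q_m')^2} = |p_m'\mathfrak{g}_1 + q_m'\mathfrak{g}_2|$ before summing over $m$. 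The sign ambiguity in $n^U$ is harmless because of the absolute value, exactly as in the paper.
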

\noindent Since the proof of Lemma \ref{lem31} can be founded in \cite{K18} and \cite{K19}, we now prove Lemmas \ref{lem32} and \ref{lem33}.
\begin{proof}[Proof of Lemma \ref{lem32}]
We first show $(\mathrm{i})$. It is easy to check that
\begin{multline*}
\mathfrak{g}^1 \sqrt{ \mathfrak{g}_{11} \mathfrak{g}_{22} - \mathfrak{g}_{12} \mathfrak{g}_{21}}  = ( \mathfrak{g}^{11} \mathfrak{g}_1 + \mathfrak{g}^{12} \mathfrak{g}_2 )\sqrt{ \mathfrak{g}_{11} \mathfrak{g}_{22} - \mathfrak{g}_{12} \mathfrak{g}_{21}}  \\
= \frac{1}{ | \mathfrak{g}_1 \times \mathfrak{g}_2 | }
\begin{pmatrix}
\frac{\partial \widehat{x}_2}{\partial X_2}  \left( \frac{\partial \widehat{x}_1}{\partial X_1}  \frac{\partial \widehat{x}_2}{\partial X_2}  - \frac{\partial \widehat{x}_1}{\partial X_2}  \frac{\partial \widehat{x}_2}{\partial X_1}  \right) -\frac{\partial \widehat{x}_3}{\partial X_2}  \left( \frac{\partial \widehat{x}_1}{\partial X_2}  \frac{\partial \widehat{x}_3}{\partial X_1}  - \frac{\partial \widehat{x}_1}{\partial X_1}  \frac{\partial \widehat{x}_3}{\partial X_2}  \right)\\
\frac{\partial \widehat{x}_2}{\partial X_2}  \left( \frac{\partial \widehat{x}_2}{\partial X_1}  \frac{\partial \widehat{x}_3}{\partial X_2}  - \frac{\partial \widehat{x}_2}{\partial X_2} \frac{\partial \widehat{x}_3}{\partial X_1}  \right) - \frac{\partial \widehat{x}_1}{\partial X_2}  \left( \frac{\partial \widehat{x}_1}{\partial X_2}  \frac{\partial \widehat{x}_2}{\partial X_1}  - \frac{\partial \widehat{x}_1}{\partial X_1}  \frac{\partial \widehat{x}_2}{\partial X_2}  \right)\\
\frac{\partial \widehat{x}_1}{\partial X_2}  \left( \frac{\partial \widehat{x}_1}{\partial X_2}  \frac{\partial \widehat{x}_3}{\partial X_1}  - \frac{\partial \widehat{x}_1}{\partial X_1}  \frac{\partial \widehat{x}_3}{\partial X_2}  \right) - \frac{\partial \widehat{x}_2}{\partial X_2}  \left( \frac{\partial \widehat{x}_2}{\partial X_2}  \frac{\partial \widehat{x}_3}{\partial X_1}  - \frac{\partial \widehat{x}_2}{\partial X_1}  \frac{\partial \widehat{x}_3}{\partial X_2}  \right)
\end{pmatrix}\\
= \mathfrak{g}_2 \times \frac{ \mathfrak{g}_1 \times \mathfrak{g}_2 }{ | \mathfrak{g}_1 \times \mathfrak{g}_2| }
\end{multline*}
and that
\begin{multline*}
\mathfrak{g}^2 \sqrt{ \mathfrak{g}_{11} \mathfrak{g}_{22} - \mathfrak{g}_{12} \mathfrak{g}_{21}}  = ( \mathfrak{g}^{21} \mathfrak{g}_1 + \mathfrak{g}^{22} \mathfrak{g}_2 )\sqrt{ \mathfrak{g}_{11} \mathfrak{g}_{22} - \mathfrak{g}_{12} \mathfrak{g}_{21}}  \\
= \frac{1}{ | \mathfrak{g}_1 \times \mathfrak{g}_2 | }
\begin{pmatrix}
\frac{\partial \widehat{x}_2}{\partial X_1}  \left( \frac{\partial \widehat{x}_1}{\partial X_1}  \frac{\partial \widehat{x}_2}{\partial X_2}  - \frac{\partial \widehat{x}_1}{\partial X_2}  \frac{\partial \widehat{x}_2}{\partial X_1}  \right) - \frac{\partial \widehat{x}_3}{\partial X_1}  \left( \frac{\partial \widehat{x}_1}{\partial X_2}  \frac{\partial \widehat{x}_3}{\partial X_1}  - \frac{\partial \widehat{x}_1}{\partial X_1}  \frac{\partial \widehat{x}_3}{\partial X_2}  \right)\\
\frac{\partial \widehat{x}_3}{\partial X_1}  \left( \frac{\partial \widehat{x}_2}{\partial X_1}  \frac{\partial \widehat{x}_3}{\partial X_2}  - \frac{\partial \widehat{x}_2}{\partial X_2}  \frac{\partial \widehat{x}_3}{\partial X_1}  \right) - \frac{\partial \widehat{x}_1}{\partial X_1}  \left(\frac{\partial \widehat{x}_1}{\partial X_2}  \frac{\partial \widehat{x}_2}{\partial X_1}  - \frac{\partial \widehat{x}_1}{\partial X_1}  \frac{\partial \widehat{x}_2}{\partial X_2}  \right)\\
\frac{\partial \widehat{x}_1}{\partial X_1}  \left( \frac{\partial \widehat{x}_1}{\partial X_2}  \frac{\partial \widehat{x}_3}{\partial X_1}  - \frac{\partial \widehat{x}_1}{\partial X_1}  \frac{\partial \widehat{x}_3}{\partial X_2}  \right) - \frac{\partial \widehat{x}_2}{\partial X_1}  \left( \frac{\partial \widehat{x}_2}{\partial X_2}  \frac{\partial \widehat{x}_3}{\partial X_1}  - \frac{\partial \widehat{x}_2}{\partial X_1}  \frac{\partial \widehat{x}_3}{\partial X_2}  \right)
\end{pmatrix} \\
= - \mathfrak{g}_1 \times \frac{ \mathfrak{g}_1 \times \mathfrak{g}_2 }{ | \mathfrak{g}_1 \times \mathfrak{g}_2| }.
\end{multline*}
Therefore we have \eqref{eq33} and \eqref{eq34}.

Next we prove $(\mathrm{ii})$. Fix $0 \leq t< T$ and $X \in \overline{U}$. Since $\mathfrak{g}_1$, $\mathfrak{g}_2$, $\mathfrak{g}_3$ is a basis for $\mathbb{R}^3$, there are $c_1, c_2,c_3,c_4, c_5, c_6 \in \mathbb{R}$ such that
\begin{align}
\frac{\partial \mathfrak{g}_3}{\partial X_1} = c_1 \mathfrak{g}_1 + c_2 \mathfrak{g}_2 + c_5 \mathfrak{g}_3,\label{eq39}\\
\frac{\partial \mathfrak{g}_3}{\partial X_2} = c_3 \mathfrak{g}_1 + c_4 \mathfrak{g}_2 + c_6 \mathfrak{g}_6.\label{Eq310}
\end{align}
From 
\begin{equation}\label{Eq311}
\mathfrak{g}_1 \cdot \mathfrak{g}_3 = 0, { \ \ \ }\mathfrak{g}_2 \cdot \mathfrak{g}_3 = 0, \text{  and }\mathfrak{g}_3 \cdot \mathfrak{g}_3 = 1,
\end{equation}
we find that
\begin{multline}\label{Eq312}
\mathfrak{g}_1 \cdot \frac{\partial \mathfrak{g}_3}{\partial X_1} = - \frac{\partial \mathfrak{g}_1}{\partial X_1} \cdot \mathfrak{g}_3,{ \ \ \ }\mathfrak{g}_1 \cdot \frac{\partial \mathfrak{g}_3}{\partial X_2} = - \frac{\partial \mathfrak{g}_1}{\partial X_2} \cdot \mathfrak{g}_3,{ \ \ \ }\mathfrak{g}_2 \cdot \frac{\partial \mathfrak{g}_3}{\partial X_1} = - \frac{\partial \mathfrak{g}_2}{\partial X_1} \cdot \mathfrak{g}_3,\\
\mathfrak{g}_2 \cdot \frac{\partial \mathfrak{g}_3}{\partial X_2} = - \frac{\partial \mathfrak{g}_2}{\partial X_2} \cdot \mathfrak{g}_3, { \ \ \ }\mathfrak{g}_3 \cdot \frac{\partial \mathfrak{g}_3}{\partial X_1} = 0, { \ \ \ } \mathfrak{g}_3 \cdot \frac{\partial \mathfrak{g}_3}{\partial X_2} = 0.
\end{multline}
Using \eqref{eq39}, \eqref{Eq311}, and \eqref{Eq312}, we observe that
\begin{align*}
0 = \frac{\partial \mathfrak{g}_3}{\partial X_1} \cdot \mathfrak{g}_3 = c_1 (\mathfrak{g}_1 \cdot \mathfrak{g}_3) + c_2 (\mathfrak{g}_2 \cdot \mathfrak{g}_3) + c_5 ( \mathfrak{g}_3 \cdot \mathfrak{g}_3) = c_5,\\
- \frac{\partial \mathfrak{g}_1}{\partial X_1} \cdot \mathfrak{g}_3 = \frac{\partial \mathfrak{g}_3}{\partial X_1} \cdot \mathfrak{g}_1 = c_1 (\mathfrak{g}_1 \cdot \mathfrak{g}_1) + c_2 (\mathfrak{g}_2 \cdot \mathfrak{g}_1) = c_1 \mathfrak{g}_{11} + c_2 \mathfrak{g}_{12},\\
- \frac{\partial \mathfrak{g}_2}{\partial X_1} \cdot \mathfrak{g}_3 = \frac{\partial \mathfrak{g}_3}{\partial X_1} \cdot \mathfrak{g}_2 = c_1 (\mathfrak{g}_1 \cdot \mathfrak{g}_2) + c_2 (\mathfrak{g}_2 \cdot \mathfrak{g}_2) = c_1 \mathfrak{g}_{12} + c_2 \mathfrak{g}_{22}.
\end{align*}
This implies that $c_5 = 0$,
\begin{align*}
c_1 = \frac{ \mathfrak{g}_{12} \left( \frac{\partial \mathfrak{g}_1}{\partial X_2} \cdot \mathfrak{g}_3 \right) - \mathfrak{g}_{22} \left( \frac{\partial \mathfrak{g}_1}{\partial X_1} \cdot \mathfrak{g}_3 \right) }{ \mathfrak{g}_{11} \mathfrak{g}_{22} - \mathfrak{g}_{12} \mathfrak{g}_{21} }, { \ }c_2 = \frac{ \mathfrak{g}_{12} \left( \frac{\partial \mathfrak{g}_1}{\partial X_1} \cdot \mathfrak{g}_3 \right) - \mathfrak{g}_{11} \left( \frac{\partial \mathfrak{g}_1}{\partial X_2} \cdot \mathfrak{g}_3 \right) }{ \mathfrak{g}_{11} \mathfrak{g}_{22} - \mathfrak{g}_{12} \mathfrak{g}_{21} }.
\end{align*}
Since $c_1$, $c_2$, $c_5$ are continuous functions with respect to $(X,t)$, \eqref{eq35} holds for all $0 \leq t< T$ and $X \in \overline{U}$. Similarly, we obtain \eqref{eq36}. Therefore the assertion $(\mathrm{ii})$ is proved.

Finally, we derive \eqref{eq37}. Fix $\varphi = { }^t ( \varphi_1 , \varphi_2 , \varphi_3 ) \in [ C^1 ( \overline{\Gamma (t)} ) ]^3$. We first consider (L.H.S) of \eqref{eq37}. Using \eqref{eq35} and \eqref{eq36}, we see that
\begin{align*}
- \left( \mathfrak{g}_2 \times  \frac{\partial \mathfrak{g}_3}{\partial X_1} - \mathfrak{g}_1 \times  \frac{\partial \mathfrak{g}_3}{\partial X_2} \right) =  - c_1 (\mathfrak{g}_2 \times \mathfrak{g}_1) + c_4 (\mathfrak{g}_1 \times \mathfrak{g}_2)\\
= | \mathfrak{g}_1 \times \mathfrak{g}_2 | ( c_1 + c_4) \frac{ \mathfrak{g}_1 \times \mathfrak{g}_2}{| \mathfrak{g}_1 \times \mathfrak{g}_2 | }. 
\end{align*}
Therefore we have
\begin{equation}\label{Eq313}
(\text{L.H.S.}) \text{ of }\eqref{eq37}   = \int_{U} | \mathfrak{g}_1 \times \mathfrak{g}_2 | ( c_1 + c_4) \left( \mathfrak{g}_3 \cdot \varphi \right) { \ } d X.
\end{equation}
Next we consider (R.H.S.) of \eqref{eq37}. From \eqref{eq31}, \eqref{eq33}, and \eqref{eq34}, we check that
\begin{multline*}
- \int_{\Gamma (t)} H_\Gamma ( n \cdot \varphi ) { \ }d \mathcal{H}^2_x = - \int_{\Gamma (t)}- ({\rm{div}}_\Gamma n ) ( n \cdot \varphi ) { \ }d \mathcal{H}^2_x\\
= \int_U \left( \frac{ \mathfrak{g}_1 \times \mathfrak{g}_2 }{ | \mathfrak{g}_1 \times \mathfrak{g}_2| } \cdot \varphi \right) \mathfrak{g}^\alpha \cdot \left( \frac{\partial }{\partial X_\alpha} \frac{ \mathfrak{g}_1 \times \mathfrak{g}_2 }{ | \mathfrak{g}_1 \times \mathfrak{g}_2| } \right) \sqrt{ \mathfrak{g}_{11} \mathfrak{g}_{22} - \mathfrak{g}_{12} \mathfrak{g}_{21} } { \ }d X\\
= \int_U \left( \mathfrak{g}_3 \cdot \varphi \right) \left\{ (\mathfrak{g}_2 \times \mathfrak{g}_3) \cdot \left( \frac{\partial \mathfrak{g}_3}{\partial X_1} \right)  - (\mathfrak{g}_1 \times \mathfrak{g}_3)  \cdot \left( \frac{\partial \mathfrak{g}_3 }{\partial X_2}\right) \right\} { \ }d X.
\end{multline*}
Using \eqref{eq35} and \eqref{eq36}, we find that
\begin{multline*}
 (\mathfrak{g}_2 \times \mathfrak{g}_3) \cdot \left( \frac{\partial \mathfrak{g}_3}{\partial X_1} \right)  - (\mathfrak{g}_1 \times \mathfrak{g}_3)  \cdot \left( \frac{\partial \mathfrak{g}_3 }{\partial X_2}\right) =  c_1 (\mathfrak{g}_2 \times \mathfrak{g}_3) \cdot \mathfrak{g}_1  + c_4 (\mathfrak{g}_1 \times \mathfrak{g}_3)  \cdot \mathfrak{g}_2\\
 = \frac{c_1}{|\mathfrak{g}_1 \times \mathfrak{g}_2 |} \{ \mathfrak{g}_2 \times (\mathfrak{g}_1 \times \mathfrak{g}_2) \} \cdot \mathfrak{g}_1 + \frac{c_4}{|\mathfrak{g}_1 \times \mathfrak{g}_2 |} \{ \mathfrak{g}_2 \times (\mathfrak{g}_1 \times \mathfrak{g}_2) \} \cdot \mathfrak{g}_2\\
=  \frac{c_1 + c_4}{|\mathfrak{g}_1 \times \mathfrak{g}_2 |} \{ ( \mathfrak{g}_1 \cdot \mathfrak{g}_1) ( \mathfrak{g}_2 \cdot \mathfrak{g}_2) - (\mathfrak{g}_1 \cdot \mathfrak{g}_2)^2 \} = (c_1 + c_4) | \mathfrak{g}_1 \times \mathfrak{g}_2|.
\end{multline*}
Here we used the facts that $v_1 \times ( v_2 \times v_3) = (v_1 \cdot v_3) v_2 - (v_1 \cdot v_2) v_3$ $(v_1,v_2,v_3 \in \mathbb{R}^3)$ and that $| \mathfrak{g}_1 \times \mathfrak{g}_2|^2 = \mathfrak{g}_{11} \mathfrak{g}_{22} - \mathfrak{g}_{12} \mathfrak{g}_{21} $. Therefore we have
\begin{equation}\label{Eq314}
(\text{R.H.S.}) \text{ of }\eqref{eq37}   = \int_{U} | \mathfrak{g}_1 \times \mathfrak{g}_2 | ( c_1 + c_4) \left( \mathfrak{g}_3 \cdot \varphi \right) { \ } d X.
\end{equation}
From \eqref{Eq313} and \eqref{Eq314} we have \eqref{eq37}. Therefore the lemma follows. 
\end{proof}

\begin{proof}[Proof of Lemma \ref{lem33}]
Fix $f , \psi \in C (\overline{\Gamma (t)} )$. We first consider (R.H.S.) of \eqref{eq38}. By the formula for the line integral and Definition \ref{def21}, we see that
\begin{equation}
\int_{\partial U} \psi { \ } d \mathcal{H}_X^1 = \sum_{m=1}^M \int_{\partial U^m} \psi { \ } d \mathcal{H}_X^1 = \sum_{m=1}^M \int_{a_m}^{b_m} \psi \sqrt{ \left( p_m' \right)^2 + \left( q_m' \right)^2 } { \ }d \ell \label{Eq315} .
\end{equation}
Here $p_m' = p'_m (\ell ) = \frac{ d p_m}{d \ell} (\ell)$ and $q_m' = q_m' ( \ell ) = \frac{ d q_m}{d \ell}( \ell )$.

Now we study the unit outer normal vector $n^U$ to $\partial U$. Since
\begin{equation*}
\partial U^m = \{ X \in \mathbb{R}^2; { \ }X_1 = p_m (\ell ) , X_2 = q_m ( \ell ) , \ell \in [a_m , b_m ] \},
\end{equation*}
we check that for almost all $\ell \in [a_m , b_m]$,
\begin{equation}\label{Eq316}
n^U( p_m (\ell ) ,  q_m (\ell )) = \left( \mp \frac{q_m' }{ \sqrt{ \left( p_m' \right)^2 + \left( q_m' \right)^2 }} ,  \pm \frac{p_m' }{ \sqrt{ \left( p_m' \right)^2 + \left( q_m' \right)^2 }}  \right).
\end{equation}
Next we consider (L.H.S.) of \eqref{eq38}. Since
\begin{align*}
\partial \Gamma^m (t) &= \{ x \in \mathbb{R}^3; { \ }x = \widehat{x} (X, t) , X \in \partial U^m \}\\
& = \{ x \in \mathbb{R}^3; { \ }x = \widehat{x} ( p_m (\ell ) , q_m (\ell ) , t) , \ell \in [a_m , b_m] \},
\end{align*}
we check that
\begin{multline*}
\int_{\partial \Gamma (t)} f { \ }d \mathcal{H}^1_x = \sum_{m=1}^M \int_{a_m}^{b_m} f \sqrt{ \left( \frac{d \widehat{x}_1}{d \ell }\right)^2 + \left( \frac{d \widehat{x}_3}{d \ell }\right)^2 + \left( \frac{d \widehat{x}_3}{d \ell }\right)^2   } { \ }d \ell\\
= \sum_{m=1}^M \int_{a_m}^{b_m} f \sqrt{ \left( p_m' \right)^2 \mathfrak{g}_{11} + 2 \left( p_m' \right) \left( q_m' \right) \mathfrak{g}_{12} +  \left( q_m' \right)^2 \mathfrak{g}_{22}   } { \ }d \ell .
\end{multline*}
Using \eqref{Eq316}, we obtain
\begin{equation}\label{Eq317}
\int_{\partial \Gamma (t)} f { \ }d \mathcal{H}^1_x = \sum_{m=1}^M \int_{a_m}^{b_m} f  | n_1^U \mathfrak{g}_2 - n_2^U \mathfrak{g}_1 | \sqrt{ \left( p_m' \right)^2 + \left( q_m' \right)^2 } { \ }d \ell .
\end{equation}
From \eqref{Eq315} and \eqref{Eq317}, we have \eqref{eq38}. Therefore Lemma \ref{lem33} is proved. 
\end{proof}

Let us attack Theorems \ref{thm23} and \ref{thm24}. 
\begin{proof}[Proof of Theorem \ref{thm23}]
Fix $ 0 \leq t < T$ and $\varphi = { }^t (\varphi_1 , \varphi_2 , \varphi_3 ) \in C^1 ( \overline{\Gamma (t)} )$. Using \eqref{eq31}, \eqref{eq33}, and \eqref{eq34}, we see that
\begin{multline*}
\int_{\Gamma (t)} {\rm{div}}_\Gamma \varphi { \ }d \mathcal{H}^2_x = \int_{U} \mathfrak{g}^{\alpha} \cdot \frac{\partial \varphi}{\partial X_\alpha}  \sqrt{ \mathfrak{g}_{11} \mathfrak{g}_{22} - \mathfrak{g}_{12} \mathfrak{g}_{21}} { \ } d X\\
= \int_{U} \left( \mathfrak{g}^1 \cdot \frac{\partial \varphi}{\partial X_1} + \mathfrak{g}^2 \cdot \frac{\partial \varphi}{\partial X_2} \right) \sqrt{ \mathfrak{g}_{11} \mathfrak{g}_{22} - \mathfrak{g}_{12} \mathfrak{g}_{21}} { \ } d X\\
= \int_{U} \left\{ \left( \mathfrak{g}_2 \times \frac{ \mathfrak{g}_1 \times \mathfrak{g}_2}{ |\mathfrak{g}_1 \times \mathfrak{g}_2| } \right) \cdot \frac{\partial \varphi}{\partial X_1} - \left( \mathfrak{g}_1 \times  \frac{ \mathfrak{g}_1 \times \mathfrak{g}_2}{ |\mathfrak{g}_1 \times \mathfrak{g}_2| } \right) \cdot \frac{\partial \varphi}{\partial X_2}  \right\} { \ } d X.
\end{multline*}
Using the Gauss divergence theorem, we have
\begin{multline}\label{Eq318}
\int_{\Gamma (t)} {\rm{div}}_\Gamma \varphi { \ }d \mathcal{H}^2_x = \int_{\partial U} \left\{ (n_1^U \mathfrak{g}_2 -n_2^U \mathfrak{g}_1) \times  \frac{ \mathfrak{g}_1 \times \mathfrak{g}_2}{ |\mathfrak{g}_1 \times \mathfrak{g}_2| } \right\} \cdot \varphi  { \ } d \mathcal{H}^1_X\\
- \int_{U} \left[ \left\{ \frac{\partial}{\partial X_1} \left( \mathfrak{g}_2 \times  \frac{ \mathfrak{g}_1 \times \mathfrak{g}_2}{ |\mathfrak{g}_1 \times \mathfrak{g}_2| } \right) \right\} \cdot \varphi - \frac{\partial}{\partial X_2} \left\{ \left( \mathfrak{g}_1 \times  \frac{ \mathfrak{g}_1 \times \mathfrak{g}_2}{ |\mathfrak{g}_1 \times \mathfrak{g}_2| } \right) \right\} \cdot \varphi  \right] { \ } d X\\
= : J_1 + J_2.
\end{multline}
From Lemma \ref{lem33}, we check that
\begin{align}
J_1 &= \int_{\partial U} \left( \frac{n_1^U \mathfrak{g}_2 - n_2^U \mathfrak{g}_1}{ | n_1^U \mathfrak{g}_2 - n_2^U \mathfrak{g}_1  | } \times \frac{\mathfrak{g}_1 \times \mathfrak{g}_2}{| \mathfrak{g}_1 \times \mathfrak{g}_2|} \right) \cdot \varphi | n_1^U \mathfrak{g}_2 - n_2^U \mathfrak{g}_1  |{ \ } d \mathcal{H}^1_X \notag\\
&= \int_{\partial \Gamma (t)} \nu \cdot \varphi { \ }d \mathcal{H}^1_x. \label{Eq319}
\end{align}
Next we study $J_2$. Since $\partial \mathfrak{g}_1/{\partial X_2} = \partial \mathfrak{g}_2/ {\partial X_1}$, we use \eqref{eq37} to have
\begin{equation}\label{Eq320}
J_2 =- \int_{U} \left\{   \mathfrak{g}_2 \times  \frac{\partial \mathfrak{g}_3}{\partial X_1} - \mathfrak{g}_1 \times  \frac{\partial \mathfrak{g}_3}{\partial X_2} \right\} \cdot \varphi { \ } d X = - \int_{\Gamma (t)} H_\Gamma (n \cdot \varphi ) { \ }d \mathcal{H}^2_x.
\end{equation}
Combing \eqref{Eq318}, \eqref{Eq319}, and \eqref{Eq320}, we see that
\begin{equation*}
\int_{\Gamma (t)} {\rm{div}}_\Gamma \varphi { \ }d \mathcal{H}^2_x = - \int_{\Gamma (t)} H_\Gamma (n \cdot \varphi ) { \ } d \mathcal{H}^2_x + \int_{\partial \Gamma (t)} \nu \cdot \varphi { \ }d \mathcal{H}^1_x.
\end{equation*}
From \eqref{Eq319} we have \eqref{eq22}. Therefore Theorem \ref{thm23} is proved. 
\end{proof}

\begin{proof}[Proof of Theorem \ref{thm24}]
We only consider the case when $\Gamma = \overset{\circ}{\overline{ \Gamma_{\mathcal{A}} \cup \Gamma_{\mathcal{B}} \cup \Gamma_{\mathcal{C}} } }$. Fix $\varphi \in [C ^1 ( \overline{\Gamma (t)} )]^3$. Applying Theorem \ref{thm23}, we check that
\begin{multline*}
\int_{\Gamma (t)} {\rm{div}}_\Gamma \varphi { \ }d \mathcal{H}^2_x = \int_{\Gamma_{\mathcal{A}} } {\rm{div}}_{\Gamma_\mathcal{A}} \varphi { \ }d \mathcal{H}^2_x + \int_{\Gamma_{\mathcal{B}} } {\rm{div}}_{\Gamma_{\mathcal{B}} } \varphi { \ }d \mathcal{H}^2_x + \int_{\Gamma_{\mathcal{C}}} {\rm{div}}_{\Gamma_{\mathcal{C}}} \varphi { \ }d \mathcal{H}^2_x\\
= - \int_{\Gamma_{\mathcal{A}} (t)} H_{\Gamma_\mathcal{A}} (n_\mathcal{A} \cdot \varphi ) { \ }d \mathcal{H}^2_x - \int_{\Gamma_{\mathcal{B}} (t)} H_{\Gamma_\mathcal{B}} (n_\mathcal{B} \cdot \varphi ) { \ }d \mathcal{H}^2_x - \int_{\Gamma_{\mathcal{C}} (t)} H_{\Gamma_\mathcal{C}} (n_\mathcal{C} \cdot \varphi ) { \ }d \mathcal{H}^2_x\\
 + \int_{\partial \Gamma_{\mathcal{A}} (t)} \nu_{\mathcal{A}} \cdot \varphi { \ }d \mathcal{H}^1_x + \int_{\partial \Gamma_{\mathcal{B}} (t)} \nu_{\mathcal{B}} \cdot \varphi { \ }d \mathcal{H}^1_x + \int_{\partial \Gamma_{\mathcal{C}} (t)} \nu_{\mathcal{C}} \cdot \varphi { \ }d \mathcal{H}^1_x.
\end{multline*}
Here $H_{\Gamma_\sharp} = - {\rm{div}}_{\Gamma_\sharp} n_\sharp$, $n_\sharp = { }^t (n_1^\sharp , n_2^\sharp , n_3^\sharp )$ is the unit outer normal vector to $\overline{\Gamma_\sharp}$, $\nu_\sharp = { }^t (\nu_1^\sharp , \nu_2^\sharp , \nu_3^\sharp )$ is the unit outer co-normal vector to $\partial \Gamma_\sharp$, where $\sharp = \mathcal{A}, \mathcal{B}, \mathcal{C}$.

From $\Gamma = \overset{\circ}{\overline{ \Gamma_{\mathcal{A}} \cup \Gamma_{\mathcal{B}} \cup \Gamma_{\mathcal{C}} } }$, $\Gamma_{\mathcal{A}} \cap \Gamma_{\mathcal{B}} = \Gamma_{\mathcal{B}} \cap \Gamma_{\mathcal{C}} = \Gamma_{\mathcal{C}} \cap \Gamma_{\mathcal{A}} = \emptyset$, we see that
\begin{equation*}
- \sum_{\sharp = \mathcal{A}, \mathcal{B}, \mathcal{C}} \int_{\Gamma_{\sharp} } H_{\Gamma_\sharp} (n_\sharp \cdot \varphi ) { \ }d \mathcal{H}^2_x = - \int_{\Gamma (t)} H_\Gamma (n \cdot \varphi ) { \ }d \mathcal{H}^2_x.
\end{equation*}
Since
\begin{multline*}
\int_{\partial \Gamma_{\flat} (t)} \nu_\flat \cdot \varphi { \ }d \mathcal{H}^1_x = \int_{\partial \Gamma_{\flat} \setminus ( \partial \Gamma_{\natural} \cup \partial \Gamma_{\sharp} )} \nu_{\flat} \cdot \varphi { \ }d \mathcal{H}^1_x \\
+ \int_{\partial \Gamma_{\flat} \cap \partial \Gamma_{\natural }} \nu_{\flat} \cdot \varphi { \ }d \mathcal{H}^1_x + \int_{\partial \Gamma_{\flat} \cap \partial \Gamma_{\sharp}} \nu_{\flat} \cdot \varphi { \ }d \mathcal{H}^1_x
\end{multline*}
for  $(\flat , \natural , \sharp ) = ( \mathcal{A} , \mathcal{B} , \mathcal{C})$, $(\flat , \natural , \sharp ) = ( \mathcal{B} , \mathcal{C} , \mathcal{A})$, $(\flat , \natural , \sharp ) = ( \mathcal{C} , \mathcal{A} , \mathcal{B})$, it follows from the assumptions that $\nu_{\mathcal{A}} = - \nu_{\mathcal{B}}$ on $\partial \Gamma_{\mathcal{A}} \cap \partial \Gamma_{\mathcal{B}}$, $\nu_{\mathcal{B}} = - \nu_{\mathcal{C}}$ on $\partial \Gamma_{\mathcal{B}} \cap \partial \Gamma_{\mathcal{C}}$, and $\nu_{\mathcal{C}} = - \nu_{\mathcal{A}}$ on $\partial \Gamma_{\mathcal{C}} \cap \partial \Gamma_{\mathcal{A}}$, to see that
\begin{multline*}
\int_{\partial \Gamma_{\mathcal{A}} (t)} \nu_{\mathcal{A}} \cdot \varphi { \ }d \mathcal{H}^1_x + \int_{\partial \Gamma_{\mathcal{B}} (t)} \nu_{\mathcal{B}} \cdot \varphi { \ }d \mathcal{H}^1_x + \int_{\partial \Gamma_{\mathcal{C}} (t)} \nu_{\mathcal{C}} \cdot \varphi { \ }d \mathcal{H}^1_x \\
= \int_{\partial \Gamma_{\mathcal{A}} \setminus ( \partial \Gamma_{\mathcal{B}} \cup \partial \Gamma_{\mathcal{C}} )} \nu_{\mathcal{A}} \cdot \varphi { \ }d \mathcal{H}^1_x + \int_{\partial \Gamma_{\mathcal{B}} \setminus ( \partial \Gamma_{\mathcal{A}}  \cup \partial \Gamma_{\mathcal{C}} )} \nu_{\mathcal{B}} \cdot \varphi { \ }d \mathcal{H}^1_x \\
+ \int_{\partial \Gamma_{\mathcal{C}} \setminus ( \partial \Gamma_{\mathcal{A}} \cup \partial \Gamma_{\mathcal{B}} )} \nu_{\mathcal{C}} \cdot \varphi { \ }d \mathcal{H}^1_x = \int_{\partial \Gamma (t)} \nu \cdot \varphi {  \ }d \mathcal{H}^1_x.
\end{multline*}
Therefore Theorem \ref{thm24} is proved. 
\end{proof}

\section{On Generalized Diffusion and Heat Systems on Evolving Surface}\label{sect4}

Let us derive and study the diffusion system \eqref{eq11} and the heat system \eqref{eq12} on the evolving surface $\Gamma (t)$ with a boundary. We first calculate a variation of our dissipation energy to prove Theorem \ref{thm25}. Then we derive \eqref{eq11} and \eqref{eq12} from an energetic point of view, and investigate the boundary conditions for the two systems to prove Theorem \ref{thm27}.

\subsection{Variation of Dissipation Energy}\label{subsec41}

Let us prove Theorem \ref{thm25}.
\begin{proof}[Proof of Theorem \ref{thm25}]
Fix $\psi \in C_0^1 (\Gamma (t) )$. A direct calculation gives
\begin{equation*}
\frac{d}{d \varepsilon} \bigg|_{\varepsilon = 0} E_{D} [f + \varepsilon \psi ](t) = - \int_{\Gamma (t)} e'_{D} ( |{\rm{grad}}_\Gamma f |^2 ) {\rm{grad}}_\Gamma f \cdot {\rm{grad}}_\Gamma \psi { \ }d \mathcal{H}^2_x .
\end{equation*}
Applying Theorem \ref{thm23}, we see that
\begin{multline*}
- \int_{\Gamma (t)} e'_{D} ( | \nabla_\Gamma f|^2 )\nabla_\Gamma f \cdot \nabla_\Gamma \psi { \ }d \mathcal{H}^2_x\\
 = \int_{\Gamma (t)} {\rm{div}}_\Gamma \{ e'_{D} ( | \nabla_\Gamma f |^2 ) \nabla_\Gamma f \} \psi { \ }d \mathcal{H}^2_x
 - \int_{\partial \Gamma (t)} e'_{D} ( | \nabla_\Gamma f |^2 ) ( \nu \cdot \nabla_\Gamma f)  \psi { \ }d \mathcal{H}^1_x.
\end{multline*}
Note that $n \cdot \nabla_\Gamma f = 0$ on $\Gamma (t)$. Since $\psi|_{\partial \Gamma (t)} = 0$, Theorem \ref{thm25} is proved. 
\end{proof}

\subsection{On Generalized Diffusion System on Evolving Surface with Boundary}\label{subsec42}

Let us derive the generalized diffusion system \eqref{eq11} on the evolving surface $\Gamma (t)$ with a boundary from an energetic point of view. We assume that $C$ is a $C^2$-function on $\overline{\Gamma_T}$. Let $e_1$ be a $C^1$-function. Fix $0 \leq t < T$. Set
\begin{equation*}
E_{GD} [ C ] (t) = - \int_{\Gamma (t)} \frac{1}{2} e_1 ( |{\rm{grad}}_\Gamma C |^2 ) { \ }d \mathcal{H}^2_x.
\end{equation*}
From \cite{KS17}, \cite{K18}, \cite{K19}, we find that $E_{GD} [ C ] (t)$ is a candidate of the energy dissipation due to general diffusion. From Theorem \ref{thm25} we obtain the following force:
\begin{equation*}
\frac{\delta E_{GD} }{\delta C} = {\rm{div}}_\Gamma \{ e_1' ( |{\rm{grad}}_\Gamma C |^2 ) {\rm{grad}}_\Gamma C \}.
\end{equation*}
We assume that the change of rate of the concentration equals to the force derived from a variation of the energy dissipation due to general diffusion. That is, for every $\Omega (t) \subset \Gamma (t)$, assume that
\begin{equation*}
\frac{d}{d t} \int_{\Omega (t)} C { \ }d \mathcal{H}^2_x = \int_{\Omega (t)} \frac{\delta E_{GD} }{\delta C}  { \ }d \mathcal{H}^2_x.
\end{equation*}
Using Proposition \ref{prop26}, we have the following diffusion system:
\begin{equation}\label{eq41}
D_t^w C + ({\rm{div}}_\Gamma w ) C = {\rm{div}}_\Gamma \{ e_1' ( |{\rm{grad}}_\Gamma C |^2 ) {\rm{grad}}_\Gamma C \} \text{ on }\Gamma_T .
\end{equation}

Now we consider the conservation and energy laws of the system \eqref{eq41}. Applying Proposition \ref{prop26} and Theorem \ref{thm23}, we find that for $t_1 < t_2$,
\begin{equation*}
\int_{\Gamma (t_2)} C { \ } d \mathcal{H}^2_x = \int_{\Gamma (t_1)} C { \ } d \mathcal{H}^2_x + \int_{t_1}^{t_2} e_1' (| {\rm{grad}}_\Gamma C |^2 ) (\nu \cdot \nabla_\Gamma ) C { \ }d \mathcal{H}^1_x d \tau
\end{equation*}
and
\begin{multline*}
\int_{\Gamma (t_2)} \frac{1}{2} |C (x,t_2) |^2 { \ } d \mathcal{H}^2_x + \int_{t_1}^{t_2} \int_{\Gamma (\tau)} e_1' (| {\rm{grad}}_\Gamma C |^2 ) | {\rm{grad}}_\Gamma C |^2 { \ }d \mathcal{H}^2_x d \tau \\
 = \int_{\Gamma (t_1)} \frac{1}{2}  | C (x,t_1) |^2 { \ } d \mathcal{H}^2_x + \int_{t_1}^{t_2} \int_{\partial \Gamma (\tau )} e_1' (| {\rm{grad}}_\Gamma C |^2 ) C ( \nu \cdot \nabla_\Gamma ) C { \ }d \mathcal{H}^1_x d \tau.
\end{multline*}
Therefore the assertions $(\mathrm{i})$ and $(\mathrm{ii})$ of Theorem \ref{thm27} is proved. It is easy to check that \eqref{eq41} is equivalent to \eqref{eq15}.

\subsection{On Generalized Heat System on Evolving Surface with Boundary}\label{subsec43}

Let us derive the generalized heat system \eqref{eq12} on the evolving surface $\Gamma (t)$ with a boundary from an energetic point of view. We assume that $\rho$ and $\theta$ are two $C^2$-functions on $\overline{\Gamma_T}$. Let $e_2$ be a $C^1$-function. Fix $0 \leq t < T$. Set
\begin{equation*}
E_{ TD} [ \theta ] (t) = - \int_{\Gamma (t)} \frac{1}{2} e_2 ( |{\rm{grad}}_\Gamma \theta |^2 ) { \ }d \mathcal{H}^2_x.
\end{equation*}
From \cite{KS17}, \cite{K18}, \cite{K19}, we find that $E_{TD} [ \theta ] (t)$ is a candidate of the energy dissipation due to thermal diffusion. From Theorem \ref{thm25} we obtain the following force:
\begin{equation*}
\frac{\delta E_{TD} }{\delta \theta} = {\rm{div}}_\Gamma \{ e_2' ( |{\rm{grad}}_\Gamma \theta |^2 ) {\rm{grad}}_\Gamma \theta  \}.
\end{equation*}
From Proposition \ref{prop26} we admit the following continuity equation:
\begin{equation}\label{eq42}
D_t^w \rho + ({\rm{div}}_\Gamma w ) \rho = 0 \text{ on } \Gamma_T.
\end{equation}
We assume that the change of rate of the heat energy equals to the force derived from a variation of the energy dissipation due to thermal diffusion. That is, for every $\Omega (t) \subset \Gamma (t)$, assume that
\begin{equation*}
\frac{d}{d t} \int_{\Omega (t)} \rho \theta { \ }d \mathcal{H}^2_x = \int_{\Omega (t)} \frac{\delta E_{TD} }{\delta \theta}  { \ }d \mathcal{H}^2_x.
\end{equation*}
Using Proposition \ref{prop26} and \eqref{eq42}, we have
\begin{equation*}
\rho D_t^w \theta = {\rm{div}}_\Gamma \{ e_2' ( |{\rm{grad}}_\Gamma \theta |^2 ) {\rm{grad}}_\Gamma \theta \} .
\end{equation*}
Thus, we have the following heat system:
\begin{equation}\label{eq43}
\begin{cases}
D_t^w \rho + ({\rm{div}}_\Gamma w ) \rho = 0 & \text{ on } \Gamma_T,\\
\rho D_t^w \theta = {\rm{div}}_\Gamma \{ e_2' (| {\rm{grad}}_\Gamma \theta |^2 ) {\rm{grad}}_\Gamma \theta \}  & \text{ on }\Gamma_T.
\end{cases}
\end{equation}

Now we consider the conservation and energy laws of the system \eqref{eq43}. Applying Proposition \ref{prop26}, Theorem \ref{thm23}, and \eqref{eq43}, we see that for $t_1 < t_2$,
\begin{equation*}
\int_{\Gamma (t_2)} \rho  \theta { \ } d \mathcal{H}^2_x = \int_{\Gamma (t_1)} \rho \theta { \ } d \mathcal{H}^2_x + \int_{t_1}^{t_2} e_2' (| {\rm{grad}}_\Gamma \theta |^2 ) (\nu \cdot \nabla_\Gamma ) \theta { \ }d \mathcal{H}^1_x d \tau
\end{equation*}
and
\begin{multline*}
\int_{\Gamma (t_2)} \frac{1}{2} \{ \rho | \theta |^2 \} (x,t_2) { \ } d \mathcal{H}^2_x + \int_{t_1}^{t_2} \int_{\Gamma (\tau)} e_2' (| {\rm{grad}}_\Gamma \theta |^2 ) | {\rm{grad}}_\Gamma \theta |^2 { \ }d \mathcal{H}^2_x d \tau \\ 
= \int_{\Gamma (t_1)} \frac{1}{2}  \{ \rho | \theta |^2 \} (x,t_1) { \ } d \mathcal{H}^2_x + \int_{t_1}^{t_2} \int_{\partial \Gamma (\tau )} e_2' (| {\rm{grad}}_\Gamma \theta |^2 ) \theta (\nu \cdot \nabla_\Gamma ) \theta { \ }d \mathcal{H}^1_x d \tau.
\end{multline*}
Therefore the assertions $(\mathrm{iii})$ and $(\mathrm{iv})$ of Theorem \ref{thm27} is proved. It is easy to check that \eqref{eq43} is equivalent to \eqref{eq16}.

\section{Divergence and Transport Theorems on Evolving Double Bubble}\label{sect5}

In this section we prepare the two key tools to make a mathematical model for a diffusion process on an evolving double bubble (see Fig. \ref{Fig1}). Let $\mathfrak{a} , \mathfrak{b} , \mathfrak{m} \in C^1([0,T))$ for some $T \in (0, \infty ]$. For every $0 \leq t < T$, assume that
\begin{equation*}
0 < \mathfrak{m} (t) < \mathfrak{a} (t) <\mathfrak{b} (t) < 2 \mathfrak{m} (t).
\end{equation*}
Set
\begin{equation*}
\mathfrak{n} = \mathfrak{n} (t) = \frac{\mathfrak{a}^2 - \mathfrak{b}^2}{4 \mathfrak{m} },
\end{equation*}
and
\begin{align*}
\Gamma_A (t) = \left\{ x \in \mathbb{R}^3; { \ }  \left( x_1 + \mathfrak{m} (t) \right)^2 + x_2^2 + x_3^2 = \mathfrak{a}^2 (t),{ \ } x_1 < \mathfrak{n} (t) \right\},\\
\Gamma_B (t) = \left\{ x \in \mathbb{R}^3; { \ }  \left( x_2 - \mathfrak{m} (t) \right)^2 + x_2^2 + x_3^2 = \mathfrak{b}^2 (t), { \ }x_1 > \mathfrak{n} (t) \right\},\\
\Gamma_S (t) = \left\{ x \in \mathbb{R}^3; { \ }   \left( x_1 + \mathfrak{m} (t) \right)^2 + x_2^2 + x_3^2 \leq \mathfrak{a}^2 (t),{ \ } x_1 = \mathfrak{n} (t) \right\},\\
\partial \Gamma_0 ( t) = \left\{ x \in \mathbb{R}^3; { \ }   \left( x_1 + \mathfrak{m} (t) \right)^2 + x_2^2 + x_3^2 = \mathfrak{a}^2 (t),{ \ } x_1 = \mathfrak{n} (t) \right\}.
\end{align*}
It is clear that
\begin{equation*}
\partial \Gamma_A (t) = \partial \Gamma_B (t) = \partial \Gamma_S (t) = \partial \Gamma_0(t) .
\end{equation*}
The aim of this section is to derive the divergence and transport theorems on the evolving double bubble. 
\begin{proposition}[Divergence theorem on double bubble]\label{prop51}{ \ }\\
For all $\varphi = { }^t ( \varphi_1 , \varphi_2 , \varphi_3 ) \in [C^1 ( \overline{\Gamma_\sharp (t)} )]^3$,
\begin{equation}\label{eq51}
\int_{\Gamma_\sharp (t)} {\rm{div}}_{\Gamma_\sharp} \varphi { \ }d \mathcal{H}^2_x = - \int_{\Gamma_\sharp (t)} H_{\Gamma_\sharp} ( n_\sharp \cdot \varphi  ) { \ } d \mathcal{H}^2_x + \int_{\partial \Gamma_0 (t)} \nu_\sharp \cdot \varphi { \ }d \mathcal{H}^1_x,
\end{equation}
where $\sharp = A, B ,S$. Here
\begin{equation*}
H_{\Gamma_A} = \frac{2}{\mathfrak{a}}, { \ \ \ } H_{\Gamma_B} = \frac{2}{\mathfrak{b}}, { \ \ \ }H_{\Gamma_S} = 0,
\end{equation*}
and
\begin{equation*}
\nu_A = 
\begin{pmatrix}
\frac{\sqrt{\mathfrak{a}^2 - (\mathfrak{m} + \mathfrak{n} )^2} }{\mathfrak{a}}\\
- \frac{\mathfrak{m} + \mathfrak{n} }{\mathfrak{a} } \cos \vartheta\\
- \frac{\mathfrak{m} + \mathfrak{n} }{\mathfrak{a} } \sin \vartheta
\end{pmatrix},{ \ }\nu_B = 
\begin{pmatrix}
- \frac{\sqrt{\mathfrak{b}^2 - (\mathfrak{m} + \mathfrak{n} )^2} }{\mathfrak{b}}\\
\frac{\mathfrak{m} + \mathfrak{n} }{\mathfrak{b} } \cos \vartheta\\
\frac{\mathfrak{m} + \mathfrak{n} }{\mathfrak{b} } \sin \vartheta
\end{pmatrix},{ \ }\nu_S = 
\begin{pmatrix}
0\\
\cos \vartheta\\
\sin \vartheta
\end{pmatrix}
\end{equation*}
at $x  = { }^t (\mathfrak{n} , \sqrt{ \mathfrak{a}^2 -(\mathfrak{m} + \mathfrak{n} )^2} \cos \vartheta , \sqrt{ \mathfrak{a}^2 -(\mathfrak{m} + \mathfrak{n} )^2}  \sin \vartheta )$, $\vartheta \in [0 , 2 \pi]$.
\end{proposition}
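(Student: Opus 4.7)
The plan is to apply Theorem \ref{thm23} separately to each of the three evolving surfaces $\Gamma_A(t)$, $\Gamma_B(t)$, $\Gamma_S(t)$, each of which is itself an evolving surface with smooth (hence piecewise Lipschitz) boundary equal to the common circle $\partial\Gamma_0(t)$. The proposition then reduces to exhibiting explicit parametrizations and substituting them into results already proved: the definition $H_\Gamma=-{\rm{div}}_\Gamma n$ yields the mean curvatures, and formula \eqref{eq22} yields the co-normals.

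For parametrizations I would use spherical coordinates about each sphere's center together with polar coordinates on the disk:
\begin{equation*}
\widehat{x}_A(\phi,\vartheta,t) = { }^t(-\mathfrak{m}+\mathfrak{a}\cos\phi,\; \mathfrak{a}\sin\phi\cos\vartheta,\; \mathfrak{a}\sin\phi\sin\vartheta)
\end{equation*}
on $(0,\phi_A)\times(0,2\pi)$ with $\cos\phi_A=(\mathfrak{m}+\mathfrak{n})/\mathfrak{a}$, the analogous $\widehat{x}_B$ about $(\mathfrak{m},0,0)$ of radius $\mathfrak{b}$, and $\widehat{x}_S(r,\vartheta,t)={ }^t(\mathfrak{n},\,r\cos\vartheta,\,r\sin\vartheta)$ on $(0,r_0)\times(0,2\pi)$ with $r_0=\sqrt{\mathfrak{a}^2-(\mathfrak{m}+\mathfrak{n})^2}$. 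The standard coordinate singularities at the poles $\phi=0$, the disk center $r=0$, and the seam $\vartheta=0$ are measure-zero and can be handled by the usual atlas argument without affecting the integrals in Theorem \ref{thm23}. The mean curvatures are then immediate: for a sphere of radius $R$ centered at $c$, the unit normal $n(x)=(x-c)/R$ extends to a unit vector field with $\nabla\cdot n=2/R$, and ${\rm{div}}_\Gamma n=\nabla\cdot n$ for any such extension, so after choosing each parametrization so that $\mathfrak{g}_1\times\mathfrak{g}_2$ reproduces the ``outer'' normal $n_\sharp$ designated in the statement one obtains $H_{\Gamma_A}=2/\mathfrak{a}$ and $H_{\Gamma_B}=2/\mathfrak{b}$. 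For $\Gamma_S$ the normal is constant in the $x_1$-direction, whence $H_{\Gamma_S}=0$.

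For the co-normals, the outer normal to each reference domain at the arc mapped onto $\partial\Gamma_0$ is $n^U={ }^t(1,0)$, so formula \eqref{eq22} collapses to
\begin{equation*}
\nu_\sharp = \frac{\mathfrak{g}_2}{|\mathfrak{g}_2|}\times\frac{\mathfrak{g}_1\times\mathfrak{g}_2}{|\mathfrak{g}_1\times\mathfrak{g}_2|}.
\end{equation*}
For $\Gamma_A$ at $\phi=\phi_A$ one has $\mathfrak{g}_1=\partial_\phi\widehat{x}_A$ and $\mathfrak{g}_2=\partial_\vartheta\widehat{x}_A$; expanding the two cross products using $\sin\phi_A=\sqrt{\mathfrak{a}^2-(\mathfrak{m}+\mathfrak{n})^2}/\mathfrak{a}$ reproduces the stated $\nu_A$. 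The computation for $\nu_B$ is identical after replacing $(-\mathfrak{m},\mathfrak{a})$ by $(\mathfrak{m},\mathfrak{b})$ and adjusting the orientation so the cap lies in $\{x_1>\mathfrak{n}\}$, which flips the sign of the first component. For $\Gamma_S$ the cross product is trivial and yields ${ }^t(0,\cos\vartheta,\sin\vartheta)$.

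The principal obstacle is orientation bookkeeping: each parametrization must be chosen so that $\mathfrak{g}_1\times\mathfrak{g}_2$ produces the normal designated as ``outer'' in the statement, and each co-normal must be checked to point outward from its respective surface at $\partial\Gamma_0$ (for example, the first component of $\nu_A$ must be positive since $\Gamma_A\subset\{x_1<\mathfrak{n}\}$, whereas the first component of $\nu_B$ must be negative). Once these sign conventions are fixed, the orthogonality relations $\nu_\sharp\cdot n_\sharp=0$ and $|\nu_\sharp|=1$ follow by direct computation, and Proposition \ref{prop51} is obtained by assembling three applications of Theorem \ref{thm23}.
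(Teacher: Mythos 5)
Your overall route---explicit charts, the representation formula \eqref{eq22} for the co-normals, and assembly through the surface divergence theorem---is the same as the paper's, but the single-chart shortcut has a genuine gap. Theorem \ref{thm23} applies to a surface as in Definition \ref{def22}, i.e.\ one covered by a single chart $\widehat{x}$ over a planar domain $U$ with $\partial \Gamma (t) = \widehat{x} (\partial U , t)$ and with the non-degeneracy condition $(\mathrm{iv})$ on $\overline{U}$. Your spherical and polar charts do not satisfy this for $\Gamma_A$, $\Gamma_B$, $\Gamma_S$: the image of $\partial U$ contains, besides $\partial \Gamma_0 (t)$, the entire meridian seam $\{ \vartheta = 0 \}$ (and condition $(\mathrm{iv})$ fails along the pole/center edge, where $\partial_\vartheta \widehat{x} = 0$). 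The seam has positive $\mathcal{H}^1$-measure, so the boundary term in \eqref{eq21} picks up line integrals along it; these are not removable by a ``measure-zero'' remark---they disappear only because the two edges $\vartheta = 0$ and $\vartheta = 2\pi$ carry opposite co-normals and cancel. That cancellation is exactly what Theorem \ref{thm24} encodes, and it is how the paper argues: each $\Gamma_\sharp (t)$ is split into pieces (e.g.\ $\Gamma_{A1}, \Gamma_{A2}$, further split in $\vartheta$), Theorem \ref{thm23} is applied piecewise, the co-normals on the internal interfaces are computed from \eqref{eq22} and checked to be opposite (e.g.\ $\nu_{A1} (r{=}1 , \vartheta ) = - \nu_{A22} (r{=}0 , \vartheta )$), and Theorem \ref{thm24} glues the pieces. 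You never invoke Theorem \ref{thm24}, so the step from the individual charts to \eqref{eq51} is missing.

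The orientation bookkeeping that you yourself single out as the principal obstacle is also not carried out correctly as written. With $\cos \phi_A = (\mathfrak{m} + \mathfrak{n})/\mathfrak{a}$, the range $(0 , \phi_A)$ parametrizes the cap $\{ x_1 > \mathfrak{n} \}$ of the sphere about ${}^t(-\mathfrak{m},0,0)$, not $\Gamma_A = \{ x_1 < \mathfrak{n} \}$; the correct range is $(\phi_A , \pi )$, and on the arc sent to $\partial \Gamma_0$ the outer normal of the parameter domain is then ${}^t(-1,0)$, not ${}^t(1,0)$. With your stated domain and $n^U = {}^t(1,0)$, formula \eqref{eq22} returns $- \nu_A$ rather than the stated $\nu_A$. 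Moreover, your chart gives $\mathfrak{g}_1 \times \mathfrak{g}_2$ pointing radially away from the center, for which $- {\rm{div}}_\Gamma n = - 2/\mathfrak{a}$; the stated value $H_{\Gamma_A} = 2/\mathfrak{a}$ corresponds to the opposite orientation, which is the one produced by the paper's charts $\widehat{x}_{A1}, \widehat{x}_{A2}$ (the proposition does not ``designate'' $n_\sharp$ except through the sign of $H_{\Gamma_\sharp}$, so this choice must be made and then used consistently in \eqref{eq22}). These are repairable slips, but since the content of Proposition \ref{prop51} is precisely these explicit formulas, the domain and sign conventions need to be fixed, together with the gluing step above.
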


\begin{proposition}[Transport theorem on double bubble]\label{prop52}{ \ }\\
For all $f \in C ( \mathbb{R}^4)$ and $\Omega (t) \subset \Gamma_\sharp (t)$,
\begin{equation}\label{eq52}
\frac{d }{d t}\int_{\Omega (t)} f (x,t) { \ }d \mathcal{H}^2_x = \int_{\Omega (t)} \{ D_t^{w_\sharp} f + ({\rm{div}}_{\Gamma_\sharp} w_\sharp ) f \} (x,t) { \ }d \mathcal{H}^2_x .
\end{equation}
where $\sharp = A, B ,S$ and $w_\sharp$ is the motion velocity of $\Gamma_\sharp (t)$ given by \eqref{eq53}, \eqref{eq54}, \eqref{eq55}.
\end{proposition}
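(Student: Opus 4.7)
The plan is to reduce Proposition \ref{prop52} to three separate applications of the surface transport theorem, Proposition \ref{prop26}, one on each of the three pieces $\Gamma_A(t)$, $\Gamma_B(t)$, and $\Gamma_S(t)$. Since the integral $\int_{\Omega(t)} f\, d\mathcal{H}^2_x$ lives entirely on a single $\Gamma_\sharp(t)$ (because $\Omega(t) \subset \Gamma_\sharp(t)$ for one fixed $\sharp$), no cross-term between the three pieces appears; the only real work is to exhibit each $\Gamma_\sharp(t)$ as an evolving surface with a piecewise Lipschitz-continuous boundary in the sense of Definition \ref{def22}, and to identify its motion velocity with $w_\sharp$ as defined in the preceding equations \eqref{eq53}--\eqref{eq55}.

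First I would produce explicit parametrizations $\widehat{x}_\sharp : \overline{U_\sharp} \times [0,T) \to \overline{\Gamma_\sharp(t)}$ verifying the four conditions (i)--(iv) of Definition \ref{def22}. For the flat disk $\Gamma_S(t)$, take $U_S$ the open unit disk in $\mathbb{R}^2$ and set $\widehat{x}_S(X,t) = {}^t(\mathfrak{n}(t), r_0(t) X_1, r_0(t) X_2)$ with $r_0(t) := \sqrt{\mathfrak{a}(t)^2 - (\mathfrak{m}(t)+\mathfrak{n}(t))^2}$; this is visibly $C^2$, bijective on $\overline{U_S}$, and has $|\mathfrak{g}_1 \times \mathfrak{g}_2| = r_0(t)^2 > 0$, so (iv) holds. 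The boundary is the unit circle, which is $C^1$ and in particular Lipschitz.

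For the spherical caps $\Gamma_A(t)$ and $\Gamma_B(t)$ a little more care is needed, because each cap in general extends past a hemisphere of its supporting sphere (one checks $0 < \mathfrak{m}+\mathfrak{n} < \mathfrak{a}$ from the standing assumptions), so the projection onto the $(x_2,x_3)$-plane is not injective and standard spherical coordinates degenerate at the antipodal pole. A clean remedy is stereographic projection from the pole of the sphere opposite to $\Gamma_\sharp$: this produces a $C^\infty$ diffeomorphism from a fixed closed disk onto $\overline{\Gamma_\sharp(t)}$, with $C^2$ time dependence inherited from the smoothness of $\mathfrak{a}, \mathfrak{b}, \mathfrak{m}$, and with strictly positive area element. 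One then differentiates in $t$ to obtain $\partial_t \widehat{x}_\sharp$, and the motion velocity $w_\sharp(\widehat{x}_\sharp(X,t),t) := \partial_t \widehat{x}_\sharp(X,t)$ reproduces the formulas \eqref{eq53}--\eqref{eq55}.

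With these parametrizations in hand, each $\Gamma_\sharp(t)$ ($\sharp = A, B, S$) fits the framework of Section \ref{sect2}, and Proposition \ref{prop26} applied to $\Gamma_\sharp(t)$ with motion velocity $w_\sharp$ yields exactly \eqref{eq52} for every $\Omega(t) \subset \Gamma_\sharp(t)$ and every $f \in C(\mathbb{R}^4)$. The main obstacle in carrying this out rigorously is the parametrization step for the two spherical caps: one must work with a parametrization that is simultaneously bijective, $C^2$ up to the boundary, and nondegenerate in the sense of Definition \ref{def22}(iv) uniformly in $t$, which rules out the naive choices (graph over a disk, spherical angles) and forces a stereographic-type construction. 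Once this is settled, the rest is a direct appeal to Proposition \ref{prop26}.
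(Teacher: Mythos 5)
Your overall strategy---exhibit each $\Gamma_\sharp(t)$ as an evolving surface in the sense of Definition \ref{def22} and invoke the transport theorem---is reasonable, but the pivotal step is asserted rather than proved, and it is in general false. The motion velocity of an evolving surface is \emph{not} intrinsic to the family $\{\Gamma_\sharp(t)\}$: two parametrizations of the same family share only the normal component of the velocity, while the tangential component depends on the chart. The fields $w_A$, $w_B$, $w_S$ appearing in \eqref{eq52} are \emph{defined} in \eqref{eq53}--\eqref{eq55} through the specific piecewise parametrizations $\widehat{x}_{A1}$, $\widehat{x}_{A2}$, $\widehat{x}_{B1}$, $\widehat{x}_{B2}$, $\widehat{x}_S$. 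A stereographic chart built from a moving projection point on a moving sphere, rescaled by a time-dependent factor to a fixed disk, differs from these charts by a \emph{time-dependent} change of parameters, so its time derivative differs from $w_\sharp$ by a nontrivial tangential field; your claim that it ``reproduces the formulas \eqref{eq53}--\eqref{eq55}'' would need a computation and will generally fail. Consequently, Proposition \ref{prop26} applied to your chart yields the transport identity with the wrong velocity, and--equally important--for subregions $\Omega(t)$ that are material with respect to \emph{your} chart, whereas in \eqref{eq52} the subregions are understood as images of fixed parameter sets under the paper's charts (this is exactly how the paper's proof begins: ``there are $U_1\subset U_{A1}$ and $U_2\subset U_{A2}$ such that\ldots''). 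For a proper subregion these two readings of $\Omega(t)$, and the two right-hand sides, do not coincide, so the statement you would actually prove is not \eqref{eq52}. (Your treatment of $\Gamma_S$ is unobjectionable, since your chart differs from $\widehat{x}_S$ only by the time-independent polar change of variables, so the velocities agree there.)

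The fix is not a cleverer global chart but simply to use the charts that define $w_\sharp$. The beyond-hemisphere difficulty you raise is handled in the paper by the decomposition already set up before the proposition: each cap is split into the two pieces $\Gamma_{A1},\Gamma_{A2}$ (resp.\ $\Gamma_{B1},\Gamma_{B2}$), and for $\Omega(t)\subset\Gamma_A(t)$ one writes $\int_{\Omega(t)} f\,d\mathcal{H}^2_x$ as the sum of the pullbacks over fixed parameter sets in $U_{A1}$ and $U_{A2}$ with area elements $\sqrt{G_{A1}}$, $\sqrt{G_{A2}}$, differentiates under the integral sign in $t$, and uses \eqref{eq53} together with the identity behind \eqref{eq32} (the time derivative of the area element producing the $({\rm div}_{\Gamma_A} w_A)$ term) to land exactly on the right-hand side of \eqref{eq52}; the cases $\sharp=B,S$ are identical. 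If you insist on a single global chart for the caps, you must additionally prove that its velocity equals the piecewise-defined $w_\sharp$, i.e.\ that the transition maps to $\widehat{x}_{A1},\widehat{x}_{A2}$ are independent of $t$---which your stereographic construction does not provide.
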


We first study the motion velocities of $\Gamma_A$, $\Gamma_B$, and $\Gamma_S$. To this end we divide the double bubble into some evolving surfaces with boundaries (see Fig. \ref{Fig2}). We set
\begin{align*}
\Gamma_{A1} (t) & = \left\{ x \in \Gamma_A (t);{ \ } -( \mathfrak{a} (t) + \mathfrak{m} (t) ) \leq x_1 < - \left( \frac{\mathfrak{a}(t)}{2} + \mathfrak{m} (t) \right) \right\},\\
\Gamma_{A2} (t) & = \left\{ x \in \Gamma_A (t);{ \ } - \left( \frac{\mathfrak{a} (t)}{2} + \mathfrak{m} (t) \right) < x_1 < \mathfrak{n} (t) \right\},\\
\Gamma_{B1} (t) & = \left\{ x \in \Gamma_B (t);{ \ } \mathfrak{n} (t)<  x_1 < \left( \frac{\mathfrak{b} (t)}{2} + \mathfrak{m} (t) \right) \right\},\\
\Gamma_{B2} (t) & = \left\{ x \in \Gamma_B (t);{ \ } \left( \frac{\mathfrak{b} (t)}{2} + \mathfrak{m} (t) \right) < x_1 \leq \mathfrak{b} (t) + \mathfrak{m} (t) \right\}.
\end{align*}
It is clear that
\begin{equation*}
\Gamma_A (t) = \overset{\circ}{\overline{ \Gamma_{A1} (t) \cup \Gamma_{A2} (t)} }, { \ }\Gamma_B (t) = \overset{\circ}{\overline{ \Gamma_{B1} (t) \cup \Gamma_{B2} (t)} }. 
\end{equation*}
For $r , \vartheta, t \in \mathbb{R}$,
\begin{equation*}
\widehat{x}_{A1} = \widehat{x}_{A1} (r , \vartheta ,t ) := \begin{pmatrix}
- \sqrt{ \mathfrak{a}^2 - \frac{3 \mathfrak{a}^2 r^2}{4} } -\mathfrak{m} \\
\frac{ \sqrt{3} }{2} \mathfrak{a} r \cos \vartheta\\
\frac{ \sqrt{3} }{2} \mathfrak{a} r  \sin \vartheta
\end{pmatrix},
\end{equation*}

\begin{equation*}
\widehat{x}_{A2} = \widehat{x}_{A2} (r , \vartheta ,t ) := \begin{pmatrix}
( \mathfrak{m}r + \mathfrak{n}r  + \mathfrak{a}r/2 - \mathfrak{a}/2) - \mathfrak{m} \\
\sqrt{ \mathfrak{a}^2 - (\mathfrak{m} r + \mathfrak{n} r + {\mathfrak{a} r}/{2}  - {\mathfrak{a}}/{2} )^2} \cos \vartheta\\
\sqrt{ \mathfrak{a}^2 - (\mathfrak{m} r + \mathfrak{n} r + {\mathfrak{a} r}/{2}  - {\mathfrak{a}}/{2} )^2} \sin \vartheta
\end{pmatrix},
\end{equation*}

\begin{equation*}
\widehat{x}_{B1} = \widehat{x}_{B1} (r , \vartheta ,t ) := \begin{pmatrix}
\mathfrak{m} - ( \mathfrak{m}r + \mathfrak{n}r  + \mathfrak{b}r/2 - \mathfrak{b}/2) \\
\sqrt{ \mathfrak{b}^2 - (\mathfrak{m} r + \mathfrak{n} r + {\mathfrak{b} r}/{2}  - {\mathfrak{b}}/{2} )^2} \cos \vartheta\\
\sqrt{ \mathfrak{b}^2 - (\mathfrak{m} r + \mathfrak{n} r + {\mathfrak{b} r}/{2}  - {\mathfrak{b}}/{2} )^2} \sin \vartheta
\end{pmatrix},
\end{equation*}

\begin{equation*}
\widehat{x}_{B2} = \widehat{x}_{B2} (r , \vartheta ,t ) := \begin{pmatrix}
\sqrt{ \mathfrak{b}^2 - \frac{3 \mathfrak{b}^2 r^2}{4} } + \mathfrak{m} \\
\frac{ \sqrt{3} }{2} \mathfrak{b} r \cos \vartheta\\
\frac{ \sqrt{3} }{2} \mathfrak{b} r \sin \vartheta
\end{pmatrix},
\end{equation*}

\begin{equation*}
\widehat{x}_S = \widehat{x}_S (r , \vartheta ,t ) := \begin{pmatrix}
\mathfrak{n}\\
\sqrt{\mathfrak{a}^2 - (\mathfrak{m} + \mathfrak{n} )^2} r \cos \vartheta\\
\sqrt{\mathfrak{a}^2 - (\mathfrak{m} + \mathfrak{n} )^2} r \sin \vartheta
\end{pmatrix}.
\end{equation*}
Set
\begin{align*}
U_{A1} & = U_{B2} =U_S = \left\{ { }^t (r , \vartheta ) \in \mathbb{R}^2; { \ }0 \leq r < 1,{ \ } 0 \leq \vartheta \leq 2 \pi  \right\},\\
U_{A2} & = U_{B1}= \left\{ { }^t (r , \vartheta ) \in \mathbb{R}^2; { \ }0 < r < 1,{ \ } 0 \leq \vartheta \leq 2 \pi  \right\}.
\end{align*}
Note that $U_{A1}$, $U_{A2}$, $U_{B1}$, $U_{B2}$, and $U_{S}$ do not depend on $t$. It is clear that
\begin{align*}
\Gamma_{A1} (t) & = \{ x \in \mathbb{R}^3; { \ }x = \widehat{x}_{A1} (r , \vartheta , t) , { \ }{ }^t( r , \vartheta ) \in U_{A1} \},\\
\Gamma_{A2} (t) & = \{ x \in \mathbb{R}^3; { \ }x = \widehat{x}_{A2} (r , \vartheta , t) , { \ }{ }^t( r , \vartheta ) \in U_{A2} \},\\
\Gamma_{B1} (t) & = \{ x \in \mathbb{R}^3; { \ }x = \widehat{x}_{B1} (r , \vartheta , t) , { \ }{ }^t( r , \vartheta ) \in U_{B1} \},\\
\Gamma_{B2} (t) & = \{ x \in \mathbb{R}^3; { \ }x = \widehat{x}_{B2} (r , \vartheta , t) , { \ }{ }^t( r , \vartheta ) \in U_{B2} \},\\
\Gamma_{S} (t)  & = \{ x \in \mathbb{R}^3; { \ }x = \widehat{x}_S (r , \vartheta , t) , { \ }{ }^t( r , \vartheta ) \in U_S \}.
\end{align*}
For $0 < t <T$ and $x \in \Gamma_A (t)$, set
\begin{equation}\label{eq53}
w_A (x, t) = \begin{cases}
 \frac{\partial \widehat{x}_{A1} }{\partial t} (r , \vartheta , t) \text{ at } x = \widehat{x}_{A1} (r, \vartheta , t) { \ }{ }^t (r, \vartheta ) \in U_{A1},\\
\frac{\partial \widehat{x}_{A2} }{\partial t} (r , \vartheta , t) \text{ at } x = \widehat{x}_{A2} (r, \vartheta , t) { \ }{ }^t (r, \vartheta ) \in \overline{U_{A2}}.
\end{cases}
\end{equation}
We call $w_A$ the motion velocity of $\Gamma_A(t)$. For $0 < t <T$ and $x \in \Gamma_B (t)$, set
\begin{equation}\label{eq54}
w_B (x, t) = \begin{cases}
 \frac{\partial \widehat{x}_{B1} }{\partial t} (r , \vartheta , t) \text{ at } x = \widehat{x}_{B1} (r, \vartheta , t), { \ }{ }^t (r, \vartheta ) \in \overline{U_{B1}},\\
\frac{\partial \widehat{x}_{B2} }{\partial t} (r , \vartheta , t) \text{ at } x = \widehat{x}_{B2} (r, \vartheta , t), { \ }{ }^t (r, \vartheta ) \in U_{B2}.
\end{cases}
\end{equation}
We call $w_B$ the motion velocity of $\Gamma_B(t)$. For $0 < t <T$ and $x \in \Gamma_S (t)$, set
\begin{equation}\label{eq55}
w_S (x, t) = \frac{\partial \widehat{x}_{S} }{\partial t} (r , \vartheta , t) \text{ at } x = \widehat{x}_S (r, \vartheta , t), { \ }{ }^t (r, \vartheta ) \in \overline{U_S}.
\end{equation}
We call $w_S$ the motion velocity of $\Gamma_S(t)$.

Now we study the divergence theorem on our evolving double bubble.
\begin{proof}[Proof of Proposition \ref{prop51}]
We first consider the case when $\sharp = A$. Set
\begin{align*}
\Gamma_{A11}(t) & = \left\{ x \in \mathbb{R}^3; { \ }x = \widehat{x}_{A1} (r , \vartheta , t), { \ }0 < r < 1,{ \ } 0 < \vartheta < \pi  \right\},\\
\Gamma_{A12}(t) &= \left\{ x \in \mathbb{R}^3; { \ }x = \widehat{x}_{A1} (r , \vartheta , t), { \ }0 < r < 1,{ \ } \pi < \vartheta < 2 \pi  \right\},\\
\Gamma_{A21}(t) &= \left\{ x \in \mathbb{R}^3; { \ }x = \widehat{x}_{A2} (r , \vartheta , t), { \ }0 < r < 1,{ \ } 0 < \vartheta < \pi  \right\},\\
\Gamma_{A22}(t) &= \left\{ x \in \mathbb{R}^3; { \ }x = \widehat{x}_{A2} (r , \vartheta , t ), { \ }0 < r < 1,{ \ } \pi < \vartheta < 2 \pi  \right\}.
\end{align*}
It is clear that
\begin{equation*}
\Gamma_{A1} (t) = \overset{\circ}{\overline{ \Gamma_{A11} (t) \cup \Gamma_{A12} (t)} } \text{ and }\Gamma_{A2} (t) = \overset{\circ}{\overline{ \Gamma_{A21} (t) \cup \Gamma_{A22} (t)} } .
\end{equation*}
Set $n^{U_{A1}} = { }^t (n_1^{U_{A1}}, n_2^{U_{A1}} ) = { }^t (1,0)$, $n^{U_{A21}} = { }^t (n_1^{U_{A21}}, n_2^{U_{A21}} ) = { }^t (1,0)$, $n^{U_{A22}} = { }^t (n_1^{U_{A22}}, n_2^{U_{A22}} ) = { }^t (-1,0)$. It is easy to see that $n^{U_{A1}}$ is the unit outer normal vector to $\partial U_{A1}$ and that $n^{U_{A21}}$, $n^{U_{A22}}$ are the unit outer normal vectors to $\partial U_{A2}$. A direct calculation shows that
\begin{align*}
\nu_{A1} & :=\frac{  n_1^{U_{A1}} \frac{\partial \widehat{x}_{A1}}{\partial X_2} }{ \left|  n_1^{U_{A1}} \frac{\partial \widehat{x}_{A1}}{\partial X_2} \right|} \times \frac{ \frac{\partial \widehat{x}_{A1}}{\partial X_1} \times \frac{\partial \widehat{x}_{A1}}{\partial X_2} }{ \left| \frac{\partial \widehat{x}_{A1}}{\partial X_1} \times \frac{\partial \widehat{x}_{A1}}{\partial X_2} \right|} = 
\begin{pmatrix}
\frac{\sqrt{3}r}{2}\\
\frac{ \sqrt{\mathfrak{a}^2 - 3 \mathfrak{a} r^2/4} }{\mathfrak{a}}  \cos \vartheta\\
\frac{ \sqrt{\mathfrak{a}^2 - 3 \mathfrak{a} r^2/4} }{\mathfrak{a}}  \sin \vartheta
\end{pmatrix},\\
\nu_{A21} & := \frac{  n_1^{U_{A21}} \frac{\partial \widehat{x}_{A2}}{\partial X_2} }{ \left|  n_1^{U_{A21}} \frac{\partial \widehat{x}_{A2}}{\partial X_2} \right|} \times \frac{ \frac{\partial \widehat{x}_{A2}}{\partial X_1} \times \frac{\partial \widehat{x}_{A2}}{\partial X_2} }{ \left| \frac{\partial \widehat{x}_{A2}}{\partial X_1} \times \frac{\partial \widehat{x}_{A2}}{\partial X_2} \right|} = 
\begin{pmatrix}
\frac{ \sqrt{\mathfrak{a}^2 - ( \mathfrak{m} r  + \mathfrak{n} r + \mathfrak{a} r/2 - \mathfrak{a} /2 )^2} }{\mathfrak{a}}\\
- \frac{ \mathfrak{m} r + \mathfrak{n} r + \mathfrak{a} r/2  - \mathfrak{a} /2 }{\mathfrak{a}} \cos \vartheta\\
- \frac{ \mathfrak{m} r  + \mathfrak{n} r + \mathfrak{a} r/2 - \mathfrak{a} /2 }{\mathfrak{a}} \sin \vartheta
\end{pmatrix},\\
\nu_{A22} & := \frac{  n_1^{U_{A22}} \frac{\partial \widehat{x}_{A2}}{\partial X_2} }{ \left|  n_1^{U_{A22}} \frac{\partial \widehat{x}_{A2}}{\partial X_2} \right|} \times \frac{ \frac{\partial \widehat{x}_{A2}}{\partial X_1} \times \frac{\partial \widehat{x}_{A2}}{\partial X_2} }{ \left| \frac{\partial \widehat{x}_{A2}}{\partial X_1} \times \frac{\partial \widehat{x}_{A2}}{\partial X_2} \right|} = 
\begin{pmatrix}
- \frac{ \sqrt{\mathfrak{a}^2 - ( \mathfrak{m} r  + \mathfrak{n} r + \mathfrak{a} r/2 - \mathfrak{a} /2 )^2} }{\mathfrak{a}}\\
\frac{ \mathfrak{m} r + \mathfrak{n} r + \mathfrak{a} r/2  - \mathfrak{a} /2 }{\mathfrak{a}} \cos \vartheta\\
\frac{ \mathfrak{m} r  + \mathfrak{n} r + \mathfrak{a} r/2 - \mathfrak{a} /2 }{\mathfrak{a}} \sin \vartheta
\end{pmatrix}.
\end{align*}
This gives
\begin{multline*}
\nu_{A1} (r= 1 , \vartheta ) = \begin{pmatrix}
\frac{\sqrt{3}}{2}\\
\frac{1}{2} \cos \vartheta\\
\frac{1}{2} \sin \vartheta
\end{pmatrix},{ \ \ \ }
\nu_{A21} (r= 1 , \vartheta ) = \begin{pmatrix}
\frac{ \sqrt{ \mathfrak{a} - (\mathfrak{m} + \mathfrak{n} )^2 }}{\mathfrak{a}}\\
\frac{\mathfrak{m} + \mathfrak{n} }{\mathfrak{a}} \cos \vartheta\\
\frac{\mathfrak{m} + \mathfrak{n} }{\mathfrak{a}} \sin \vartheta
\end{pmatrix},\\
\nu_{A22} (r= 0 , \vartheta ) = \begin{pmatrix}
- \frac{\sqrt{3}}{2}\\
- \frac{1}{2} \cos \vartheta\\
- \frac{1}{2} \sin \vartheta
\end{pmatrix}, \text{ and } \nu_{A1} (r=1 , \vartheta )= - \nu_{A22} (r = 0 , \vartheta ).
\end{multline*}
Applying Theorems \ref{thm23} and \ref{thm24}, we see that for all $\varphi \in C^1 ( \overline{\Gamma_A (t)})]^3$,
\begin{equation*}
\int_{\Gamma_A(t)} {\rm{div}}_{\Gamma_A} \varphi { \ }d \mathcal{H}^2_x = - \int_{\Gamma_A(t)} H_{\Gamma_A} ( n_A \cdot \varphi  ) { \ } d \mathcal{H}^2_x + \int_{\partial \Gamma_0 (t)} \nu_A \cdot \varphi { \ }d \mathcal{H}^1_x.
\end{equation*}
Here $\nu_A = \nu_{A21}$.

Next we consider \eqref{eq51} in the case when $\sharp = B$. Set
\begin{align*}
\Gamma_{B11}(t) & = \left\{ x \in \mathbb{R}^3; { \ }x = \widehat{x}_{B1} (r , \vartheta , t), { \ }0 < r < 1,{ \ } 0 < \vartheta < \pi  \right\},\\
\Gamma_{B12}(t) &= \left\{ x \in \mathbb{R}^3; { \ }x = \widehat{x}_{B1} (r , \vartheta , t ), { \ }0 < r < 1,{ \ } \pi < \vartheta < 2 \pi  \right\},\\
\Gamma_{B21}(t) &= \left\{ x \in \mathbb{R}^3; { \ }x = \widehat{x}_{B2} (r , \vartheta , t), { \ }0 < r < 1,{ \ } 0 < \vartheta < \pi  \right\},\\
\Gamma_{B22}(t) &= \left\{ x \in \mathbb{R}^3; { \ }x = \widehat{x}_{B2} (r , \vartheta , t), { \ }0 < r < 1,{ \ } \pi < \vartheta < 2 \pi  \right\}.
\end{align*}
It is clear that
\begin{equation*}
\Gamma_{B1} (t) = \overset{\circ}{\overline{ \Gamma_{B11} (t) \cup \Gamma_{B12} (t)} } \text{ and }\Gamma_{B2} (t) = \overset{\circ}{\overline{ \Gamma_{B21} (t) \cup \Gamma_{B22} (t)} } .
\end{equation*}
Applying Theorems \ref{thm23} and \ref{thm24}, we see that for all $\varphi \in C^1 ( \overline{\Gamma_B (t)})]^3$,
\begin{equation*}
\int_{\Gamma_B(t)} {\rm{div}}_{\Gamma_B} \varphi { \ }d \mathcal{H}^2_x = - \int_{\Gamma_B (t)} H_{\Gamma_B} ( n_B \cdot \varphi  ) { \ } d \mathcal{H}^2_x + \int_{\partial \Gamma_0 (t)} \nu_B \cdot \varphi { \ }d \mathcal{H}^1_x.
\end{equation*}
Note that 
\begin{equation*}
\nu_{B} = \frac{  - \frac{\partial \widehat{x}_{B2}}{\partial X_2} }{ \left|  - \frac{\partial \widehat{x}_{B2}}{\partial X_2} \right|} \times \frac{ \frac{\partial \widehat{x}_{B2}}{\partial X_1} \times \frac{\partial \widehat{x}_{B2}}{\partial X_2} }{ \left| \frac{\partial \widehat{x}_{B2}}{\partial X_1} \times \frac{\partial \widehat{x}_{B2}}{\partial X_2} \right|} = 
\begin{pmatrix}
- \frac{\sqrt{\mathfrak{b}^2 - (\mathfrak{m} + \mathfrak{n} )^2} }{\mathfrak{b}}\\
\frac{\mathfrak{m} + \mathfrak{n} }{\mathfrak{b} } \cos \vartheta\\
\frac{\mathfrak{m} + \mathfrak{n} }{\mathfrak{b} } \sin \vartheta
\end{pmatrix}{ \ }(r=1).
\end{equation*}

Finally, we consider \eqref{eq51} in the case when $\sharp = S$. Set
\begin{align*}
\Gamma_{S1}(t) & = \left\{ x \in \mathbb{R}^3; { \ }x = \widehat{x}_S (r , \vartheta , t), { \ }0 < r <1,{ \ } 0 < \vartheta < \pi  \right\},\\
\Gamma_{S2}(t) & = \left\{ x \in \mathbb{R}^3; { \ }x = \widehat{x}_S (r , \vartheta , t), { \ }0 < r < 1,{ \ } \pi < \vartheta < 2 \pi  \right\}.
\end{align*}
It is clear that
\begin{equation*}
\Gamma_{S1} (t) = \overset{\circ}{\overline{ \Gamma_{S1} (t) \cup \Gamma_{S2} (t)} }.
\end{equation*}
Applying Theorems \ref{thm23} and \ref{thm24}, we see that for all $\varphi \in C^1 ( \overline{\Gamma_S (t)})]^3$,
\begin{equation*}
\int_{\Gamma_S(t)} {\rm{div}}_{\Gamma_S} \varphi { \ }d \mathcal{H}^2_x = - \int_{\Gamma_S(t)} H_{\Gamma_S} ( n_S \cdot \varphi  ) { \ } d \mathcal{H}^2_x + \int_{\partial \Gamma_0 (t)} \nu_S \cdot \varphi { \ }d \mathcal{H}^1_x.
\end{equation*}
Note that 
\begin{equation*}
\nu_{S} = \frac{ \frac{\partial \widehat{x}_S}{\partial X_2} }{ \left| \frac{\partial \widehat{x}_S }{\partial X_2} \right|} \times \frac{ \frac{\partial \widehat{x}_S}{\partial X_1} \times \frac{\partial \widehat{x}_S }{\partial X_2} }{ \left| \frac{\partial \widehat{x}_S }{\partial X_1} \times \frac{\partial \widehat{x}_S }{\partial X_2} \right|} = 
\begin{pmatrix}
0\\
\cos \vartheta\\
\sin \vartheta
\end{pmatrix}.
\end{equation*}
Therefore Proposition \ref{prop51} is proved. 
\end{proof}

Finally, we study the transport theorem on our evolving double bubble.
\begin{proof}[Proof of Proposition \ref{prop52}]
We first consider the case when $\sharp = A$. Set
\begin{align*}
G_{A1} = \left( \frac{\partial \widehat{x}_{A1} }{\partial r} \cdot \frac{\partial \widehat{x}_{A1} }{\partial r} \right) \left( \frac{\partial \widehat{x}_{A1} }{\partial \vartheta} \cdot \frac{\partial \widehat{x}_{A1} }{\partial \vartheta} \right) - \left( \frac{\partial \widehat{x}_{A1} }{\partial r} \cdot \frac{\partial \widehat{x}_{A1} }{\partial \vartheta} \right)^2,\\
G_{A2} = \left( \frac{\partial \widehat{x}_{A2} }{\partial r} \cdot \frac{\partial \widehat{x}_{A2} }{\partial r} \right) \left( \frac{\partial \widehat{x}_{A2} }{\partial \vartheta} \cdot \frac{\partial \widehat{x}_{A2} }{\partial \vartheta} \right) - \left( \frac{\partial \widehat{x}_{A2} }{\partial r} \cdot \frac{\partial \widehat{x}_{A2} }{\partial \vartheta} \right)^2.
\end{align*}
Fix $\Omega_A (t) \subset \Gamma_A (t)$. By definition, there are $U_1 \subset U_{A1} $ and $U_2 \subset U_{A2}$ such that
\begin{multline*}
\int_{\Omega_A (t)} f (x,t) { \ } d \mathcal{H}^2_x = \int_{U_{1} \cap U_{A1}} f ( \widehat{x}_{A1} (r , \vartheta , t ), t) \sqrt{G_{A1}} { \ } d r d \vartheta\\
 +  \int_{U_2 \cap U_{A2}} f ( \widehat{x}_{A2} (r , \vartheta , t ) , t) \sqrt{G_{A2}} { \ } d r d \vartheta. 
\end{multline*}
Using \eqref{eq53} and \eqref{eq32}, we find that
\begin{equation*}
\frac{d}{d t}\int_{\Omega_A (t)} f { \ } d \mathcal{H}^2_x = \int_{\Omega_A (t)} D_t^{w_A } f + ({\rm{div}}_{\Gamma_A} w_A) f { \ }d \mathcal{H}^2_x. 
\end{equation*}
Similarly, we can derive \eqref{eq52} in the cases when $\sharp = B$ and $\sharp = S$. Therefore Proposition \ref{prop52} is proved. 
\end{proof}

\section{On Diffusion System on Evolving Double Bubble}\label{sect6}
In this section we make a mathematical model for a diffusion process on an evolving double bubble from an energetic point of view. Let us introduce our setting. Let $\Gamma_A(t)$, $\Gamma_B (t)$, and $\Gamma_S(t)$ be the three evolving surfaces defined by Section \ref{sect5}. Since $\partial \Gamma_A (t) = \partial \Gamma_B (t) = \partial \Gamma_S (t)$, we write $\partial \Gamma_0 (t)$ as $\partial \Gamma_\sharp (t)$. Set
\begin{multline*}
\Gamma_\sharp^T = \bigcup_{0 < t < T}\{ \Gamma_\sharp (t) \times \{ t \} \}, { \ }\partial \Gamma_\sharp^T = \bigcup_{0 < t < T}\{ \partial \Gamma_\sharp (t) \times \{ t \} \},\\
\overline{\Gamma_\sharp^T} = \bigcup_{0 \leq t < T}\{ \overline{\Gamma_\sharp (t)} \times \{ t \} \},{ \ } \frac{\partial C_\sharp}{\partial \nu_\sharp} = ( \nu_\sharp \cdot \nabla_{\Gamma_\sharp})C_\sharp,\\
 C^k (\overline{\Gamma_\sharp^T} ) = \{ f : \overline{\Gamma_\sharp^T} \to \mathbb{R}; { \ }f = \mathcal{F}|_{\overline{\Gamma_\sharp^T}}, \text{ for some }\mathcal{F} \in C^k (\mathbb{R}^4) \}.
\end{multline*}
Suppose that $C_\sharp \in C^2 ( \overline{\Gamma_\sharp^T} )$, $\kappa_\sharp \in C^1 ( \overline{\Gamma_\sharp^T} )$ and $\kappa_\sharp >0$, where $\sharp = A , B , S$. 
We assume that the concentrations $C_A$, $C_B$, and $C_S$ are equal to each other at the boundary $\partial \Gamma_0$. In other words, under the following restriction:
\begin{equation*}
C_A |_{\partial \Gamma_A (t)} = C_B |_{\partial \Gamma_B (t)} = C_S |_{\partial \Gamma_S (t)},
\end{equation*}
we make a mathematical model for a diffusion process on our evolving double bubble.

We first set the energy dissipation due to surface diffusion as follows:
\begin{equation*}
E_{SD} [ C_A , C_B , C_S ] = - \sum_{\sharp = A,B,S} \int_{\Gamma_\sharp (t)} \frac{\kappa_\sharp}{2} |{\rm{grad}}_{\Gamma_\sharp} C_\sharp |^2 { \ } d \mathcal{H}^2_x.
\end{equation*}
From \cite{KS17}, \cite{K18}, \cite{K19}, we find that $E_{SD}$ is a candidate of the energy dissipation due to surface diffusion. Applying Proposition \ref{prop51}, we check that for all $\psi_A  \in C^1 ( \overline{\Gamma_A(t)} )$, $\psi_B  \in C^1 ( \overline{\Gamma_B(t)} )$, and $\psi_S  \in C^1 ( \overline{\Gamma_S(t)} )$,
\begin{multline}\label{eq61}
\frac{d}{d \varepsilon} \bigg|_{\varepsilon = 0} E_{SD} [ C_A + \varepsilon \psi_A , C_B + \varepsilon \psi_B  , C_S + \varepsilon \psi_S  ] \\
= \sum_{\sharp = A,B,S}\int_{\Gamma_\sharp (t)} {\rm{div}}_{\Gamma_\sharp} \{ \kappa_\sharp \nabla_{\Gamma_\sharp} C_\sharp \} \psi_\sharp { \ } d \mathcal{H}^2_x - \sum_{\sharp = A,B,S} \int_{ \partial \Gamma_0} \kappa_\sharp \frac{\partial C_\sharp}{\partial \nu_\sharp} \psi_\sharp  { \ } d \mathcal{H}^1_x,
\end{multline}
where $- 1 < \varepsilon < 1$. Note that $n_\sharp \cdot ( \nabla_{\Gamma_\sharp} f) = 0$. Since
\begin{equation*}
C_A|_{\partial \Gamma_A} = C_B|_{\partial \Gamma_B} = C_S|_{\partial \Gamma_S},
\end{equation*}
we assume that for every $- 1 < \varepsilon < 1$,
\begin{equation*}
(C_A + \varepsilon \psi_A )|_{\partial \Gamma_A} = (C_B + \varepsilon \psi_B)|_{\partial \Gamma_B} = (C_S + \varepsilon \psi_S )|_{\partial \Gamma_S}.
\end{equation*}
Then we have
\begin{equation}\label{eq62}
\psi_A|_{\partial \Gamma_A} = \psi_B|_{\partial \Gamma_B} = \psi_S|_{\partial \Gamma_S}.
\end{equation}
Now we consider a variation of the dissipation energy $E_{SD}$. We assume that all $\psi_A  \in C^1 ( \overline{\Gamma_A(t)} )$, $\psi_B  \in C^1 ( \overline{\Gamma_B(t)} )$, and $\psi_S  \in C^1 ( \overline{\Gamma_S(t)} )$ satisfying \eqref{eq62},
\begin{equation*}
\frac{d}{d \varepsilon} \bigg|_{\varepsilon = 0} E_{SD} [ C_A + \varepsilon \psi_A , C_B + \varepsilon \psi_B  , C_S + \varepsilon \psi_S  ]= 0.
\end{equation*}
We first consider the case when $\psi_\sharp|_{\partial \Gamma_\sharp} = 0$. Since
\begin{equation*}
\frac{d}{d \varepsilon} \bigg|_{\varepsilon = 0} E_{SD} [ C_A + \varepsilon \psi_A , C_B + \varepsilon \psi_B  , C_S + \varepsilon \psi_S  ]= 0
\end{equation*}
for all $\psi_A  \in C_0^1 ( \Gamma_A(t) )$, $\psi_B  \in C_0^1 ( \Gamma_B(t) )$, and $\psi_S  \in C_0^1 ( \Gamma_S(t) )$, it follows from \eqref{eq61} to have
\begin{equation}\label{eq63}
\sum_{\sharp = A,B,S}1_{\Gamma_\sharp}{\rm{div}}_{\Gamma_\sharp} \{ \kappa_\sharp \nabla_{\Gamma_\sharp} C_\sharp \} = 0.
\end{equation}
Here
\begin{equation*}
1_{\Gamma_\sharp} = 1_{\Gamma_\sharp} (x)= \begin{cases}
1 \text{ if } x \in \Gamma_\sharp,\\
0 \text{ if } x \notin \Gamma_\sharp.
\end{cases}
\end{equation*}
Now we consider the case when $\psi_\sharp|_{\partial \Gamma_\sharp} \neq 0$. By \eqref{eq61} and \eqref{eq63}, we see that for all $\psi_A  \in C^1 ( \overline{\Gamma_A(t)} )$, $\psi_B  \in C^1 ( \overline{\Gamma_B(t)} )$, and $\psi_S  \in C^1 ( \overline{\Gamma_S(t)} )$, 
\begin{multline*}
\frac{d}{d \varepsilon} \bigg|_{\varepsilon = 0} E_{SD} [ C_A + \varepsilon \psi_A , C_B + \varepsilon \psi_B  , C_S + \varepsilon \psi_S  ] \\
= - \int_{ \partial \Gamma_0 (t)} \left( \kappa_A \frac{\partial C_A}{\partial \nu_A}\bigg|_{\partial \Gamma_A} + \kappa_B \frac{\partial C_B}{\partial \nu_B} \bigg|_{\partial \Gamma_B} + \kappa_S \frac{\partial C_S}{\partial \nu_S}\bigg|_{\partial \Gamma_S}  \right) \psi_0  { \ } d \mathcal{H}^1_x = 0,
\end{multline*}
where $\psi_0 = \psi_\sharp |_{\partial \Gamma_0}$. Since $\psi_0$ is arbitrary, we obtain
\begin{equation*}
\kappa_A \frac{\partial C_A}{\partial \nu_A}\bigg|_{\partial \Gamma_A} + \kappa_B \frac{\partial C_B}{\partial \nu_B} \bigg|_{\partial \Gamma_B} + \kappa_S \frac{\partial C_S}{\partial \nu_S}\bigg|_{\partial \Gamma_S} = 0.
\end{equation*}
Therefore we have
\begin{align*}
& \frac{\delta E_{SD}}{\delta C} = \sum_{\sharp = A,B,S}1_{\Gamma_\sharp}{\rm{div}}_{\Gamma_\sharp} \{ \kappa_\sharp \nabla_{\Gamma_\sharp} C_\sharp \},\\
& \kappa_A \frac{\partial C_A}{\partial \nu_A}\bigg|_{\partial \Gamma_A} + \kappa_B \frac{\partial C_B}{\partial \nu_B} \bigg|_{\partial \Gamma_B} + \kappa_S \frac{\partial C_S}{\partial \nu_S}\bigg|_{\partial \Gamma_S} = 0.
\end{align*}

We assume that the change of the concentration equals to the force derived from a variation of the dissipation energy $E_{SD}$. That is, for ever ball $B_r \subset \mathbb{R}^3$, assume that
\begin{multline*}
\frac{d}{d t} \left( \int_{\Gamma_A (t) \cap B_r} C_A { \ }d \mathcal{H}^2_x +  \int_{\Gamma_B (t) \cap B_r} C_B { \ }d \mathcal{H}^2_x + \int_{\Gamma_S (t) \cap B_r} C_S { \ }d \mathcal{H}^2_x \right) \\
= \int_{\Gamma_A (t) \cap B_r} \frac{\delta E_{SD}}{\delta C}  { \ }d \mathcal{H}^2_x +  \int_{\Gamma_B (t) \cap B_r} \frac{\delta E_{SD}}{\delta C}  { \ }d \mathcal{H}^2_x + \int_{\Gamma_S (t) \cap B_r} \frac{\delta E_{SD}}{\delta C}  { \ }d \mathcal{H}^2_x  .
\end{multline*}
Then we use Proposition \ref{prop52} to have
\begin{equation}\label{eq64}
\begin{cases}
D_t^{w_A} C_A + ({\rm{div}}_{\Gamma_A} w_A ) C_A ={\rm{div}}_{\Gamma_A} \{ \kappa_A {\rm{grad}}_{\Gamma_A} C_A \} & \text{ on }\Gamma_A^T,\\
D_t^{w_B} C_B + ({\rm{div}}_{\Gamma_B} w_B ) C_B ={\rm{div}}_{\Gamma_B} \{ \kappa_B {\rm{grad}}_{\Gamma_B} C_B \} & \text{ on } \Gamma_B^T, \\
D_t^{w_S} C_S + ({\rm{div}}_{\Gamma_S} w_S ) C_S ={\rm{div}}_{\Gamma_S} \{ \kappa_S {\rm{grad}}_{\Gamma_S} C_S \} & \text{ on } \Gamma_S^T.
\end{cases}
\end{equation}

Finally, we consider the conservation and energy laws of the system \eqref{eq64} under the following boundary conditions:
\begin{equation}\label{eq65}
\begin{cases}
C_A|_{\partial \Gamma_A} = C_B|_{\partial \Gamma_B} = C_S|_{\partial \Gamma_S},\\
\kappa_A \frac{\partial C_A}{\partial \nu_A}\bigg|_{\partial \Gamma_A} + \kappa_B \frac{\partial C_B}{\partial \nu_B} \bigg|_{\partial \Gamma_B} + \kappa_S \frac{\partial C_S}{\partial \nu_S}\bigg|_{\partial \Gamma_S} = 0.
\end{cases}
\end{equation}
Applying Propositions \ref{prop51} and \ref{prop52}, \eqref{eq64}, and \eqref{eq65}, we see that
\begin{multline*}
\frac{d}{d t} \left( \int_{\Gamma_A (t) } C_A { \ }d \mathcal{H}^2_x +  \int_{\Gamma_B (t) } C_B { \ }d \mathcal{H}^2_x + \int_{\Gamma_S (t)} C_S { \ }d \mathcal{H}^2_x \right) \\
= - \int_{\partial \Gamma_0 (t)} \left( \kappa_A \frac{\partial C_A}{\partial \nu_A}\bigg|_{\partial \Gamma_A} + \kappa_B \frac{\partial C_B}{\partial \nu_B} \bigg|_{\partial \Gamma_B} + \kappa_S \frac{\partial C_S}{\partial \nu_S}\bigg|_{\partial \Gamma_S} \right) { \ }d \mathcal{H}^1_x = 0
\end{multline*}
and that
\begin{multline*}
\frac{1}{2} \frac{d}{d t} \left( \int_{\Gamma_A (t) } |C_A|^2 { \ }d \mathcal{H}^2_x +  \int_{\Gamma_B (t) } |C_B|^2 { \ }d \mathcal{H}^2_x + \int_{\Gamma_S (t)} |C_S|^2 { \ }d \mathcal{H}^2_x \right) \\
= - \sum_{\sharp = A,B,S }\int_{\Gamma_\sharp ( t )} \kappa_\sharp |{\rm{grad}}_{\Gamma_\sharp} C_\sharp |^2 { \ } d \mathcal{H}^2_x + \sum_{\sharp = A,B,S }\int_{\partial \Gamma_0 ( t )} C_\sharp \kappa_\sharp (\nu_\sharp \cdot \nabla_{\Gamma_\sharp} )C_\sharp { \ } d \mathcal{H}^1_x\\
= - \sum_{\sharp = A,B,S }\int_{\Gamma_\sharp ( t )} \kappa_\sharp |{\rm{grad}}_{\Gamma_\sharp} C_\sharp |^2 { \ } d \mathcal{H}^2_x.
\end{multline*}
Integrating with respect to time, we have the conservation law \eqref{eq18} and the energy law \eqref{eq19}. It is easy to check that \eqref{eq64} is equivalent to \eqref{Eq110}.


\begin{thebibliography}{99}


\bibitem{Bet86}David E. Betounes, \emph{Kinematics of submanifolds and the mean curvature normal}. Arch. Rational Mech. Anal. 96  (1986), no. 1, 1--27. MR0853973


\bibitem{Cia05} Philippe G. Ciarlet, \emph{An introduction to differential geometry with applications to elasticity}. Reprinted from J. Elasticity 78/79 (2005), no. 1-3 [ MR2196098]. Springer, Dordrecht, 2005. iv+209 pp. ISBN: 978-1-4020-4247-8; 1-4020-4247-7 MR2312300

\bibitem{DE07}Gerhard Dziuk and Charles M. Elliott, \emph{Finite elements on evolving surfaces}. IMA J. Numer. Anal. 27 (2007), no. 2, 262--292. MR2317005. 


\bibitem{GSW89}Morton E. Gurtin, Allan Struthers, and William O. Williams, \emph{A transport theorem for moving interfaces}. Quart. Appl. Math.  47  (1989), no. 4, 773--777. MR1031691


\bibitem{Jos11} J\"{u}rgen Jost, \emph{Riemannian geometry and geometric analysis}. Sixth edition. Universitext. Springer, Heidelberg, 2011. xiv+611 pp. ISBN: 978-3-642-21297-0 MR2829653

\bibitem{K18}Hajime Koba, \emph{On Derivation of Compressible Fluid Systems on an Evolving Surface}, Quart. Appl. Math. 76  (2018),  no. 2, 303--359. arXiv:1705.07184 

\bibitem{K19}Hajime Koba, \emph{On Generalized Compressible Fluid Systems on an Evolving Surface with a Boundary}, preprint.

\bibitem{KLG17}Hajime Koba, Chun Liu, and Yoshikazu Giga \emph{Energetic variational approaches for incompressible fluid systems on an evolving surface}, Quart. Appl. Math. 75 (2017), no 2, 359--389. MR3614501. \emph{Errata to Energetic variational approaches for incompressible fluid systems on an evolving surface}. Quart. Appl. Math. 76 (2018), no 1, 147--152.


\bibitem{KS17}Hajime Koba and Kazuki Sato, \emph{Energetic variational approaches for non-Newtonian fluid systems}, to appear in Z. Angew. Math. Phiys (Zeitschrift f\"{u}r angewandte Mathematik und Physik). arXiv:1705.06956

\bibitem{Ons31a}Lars Onsager \emph{Reciprocal Relations in Irreversible Processes}. I. Physical Review. (1931);37:405-109 DOI:https://doi.org/10.1103/PhysRev.37.405

\bibitem{Ons31b}Lars Onsager \emph{Reciprocal Relations in Irreversible Processes}. II. Physical Review. (1931);38:2265-79 DOI:https://doi.org/10.1103/PhysRev.38.2265

\bibitem{Sim83}Leon Simon, \emph{Lectures on geometric measure theory}. Proceedings of the Centre for Mathematical Analysis, Australian National University, 3. Australian National University, Centre for Mathematical Analysis, Canberra, 1983. vii+272 pp. ISBN: 0-86784-429-9 MR0756417.

\bibitem{Str73}Hon. J.W. Strutt M.A., \emph{Some General Theorems Relating to Vibrations}. Proc. London. Math. Soc. (1873);IV:357-68. MR1575554

\end{thebibliography}
\end{document}